\def\denseformat{
\setlength{\textheight}{9.65in} \setlength{\textwidth}{6.8in}
\setlength{\oddsidemargin}{-0.3in} \setlength{\headsep}{10pt}
\setlength{\topmargin}{-0.33in} \setlength{\columnsep}{0.375in}
}
\newmdenv[linewidth=0pt]{myframe}
\newcommand{\qedlemma}[1]{$\square$\;\small
\textit{Lemma~\ref{#1}}}
\newcommand{\qedthm}[1]{$\blacksquare$\;\small
\textit{Theorem~\ref{#1}}}
\newcommand{\defeq}{%
  \mathrel{\vbox{\offinterlineskip\ialign{%
    \hfil##\hfil\cr
    $\scriptscriptstyle\triangle$\cr
    $=$\cr
}}}}
\theoremstyle{definition}
\newtheorem{definition}{Definition}
\theoremstyle{plain}
\newtheorem{observation}{Observation}
\newtheorem{corollary}{Corollary}
\newtheorem{theorem}{Theorem}  
\newtheorem{lemma}{Lemma}[theorem]
\theoremstyle{remark}
\begin{document}

\title{ On Liveness of Dynamic Storage}

\author{%
  {Alexander Spiegelman}\\EE Department\\Technion, Haifa,
  Israel\\sashas@tx.technion.ac.il\\+972547553558
  \and
  {Idit Keidar}\\EE Department\\Technion, Haifa,
  Israel\\idish@ee.technion.ac.il\\
}
\date{}
\maketitle

\begin{abstract}

Dynamic distributed storage algorithms such as DynaStore,
Reconfigurable Paxos, RAMBO, and RDS, do not ensure liveness
(wait-freedom) in asynchronous runs with infinitely many
reconfigurations. We prove that this is inherent for
asynchronous dynamic storage algorithms, including ones that use
$\Omega$ or $\diamond S$ oracles. Our result holds even if only
\emph{one} process may fail, provided that machines that were
successfully removed from the system's configuration may be
switched off by an administrator. Intuitively, the
impossibility relies on the fact that a correct process can be
\emph{suspected} to have failed at any time, i.e., its failure
is indistinguishable to other processes from slow delivery of
its messages, and so the system should be able to reconfigure
without waiting for this process to complete its pending
operations.

To circumvent this result, we define a \emph{dynamic eventually
perfect failure detector}, and present an algorithm that uses it
to emulate wait-free dynamic atomic storage (with no
restrictions on reconfiguration rate). Together, our results
thus draw a sharp line between oracles like $\Omega$ and
$\diamond S$, which allow some correct process to continue to be
suspected forever, and  a dynamic eventually perfect one, which
does not.

\end{abstract}

\setcounter{page}{0}
\thispagestyle{empty}

\newpage

\section{Introduction}

Many works in the last decade have dealt with the
emulation of reliable storage via message passing in
dynamic
systems~\cite{ShraerDynadisc,dynastore,gilbert2003rambo,gilbert2010rambo,RDS2009reconfigurable,baldoni2009implementing,SMRlamport2010reconfiguring,SMRvirtually}.
The motivation behind such systems is to allow the current
configuration of participating processes to be changed;
once a process is removed from the current configuration,
a system administrator may shut it down, and the storage 
algorithm can no longer rely on it in order to ensure progress.
We elaborate more on the requirement from dynamic storage in
Section \ref{sec:models}.
To the best of our knowledge, no
previous dynamic storage solution ensures
completion of all operations in asynchronous runs with
unrestricted reconfigurations, as detailed in Section
\ref{sec:relatedWork}.

In Section \ref{sec:storageImpossibility}, we show that
this limitation is inherent as long as even \emph{one}
process that was not removed from the current configuration can
fail.
Specifically, we show that even a \emph{safe} register
emulation cannot guarantee liveness for all operations (i.e.,
wait-freedom) in asynchronous runs with unrestricted
reconfigurations.
The key to the impossibility proof is that, in asynchronous
models, a slow correct process whose messages are delayed can be
\emph{suspected} to be faulty by all other processes,
i.e., the two scenarios are indistinguishable and so the system
should be able to reconfigure without allowing it to complete
its operations.
Our result holds even if processes are
equipped with oracles like $\Omega$ or $\diamond S$, which allow
them to chose a leader and solve consensus in every
configuration, yet some correct process can continue to be
suspected forever.

On the other hand, with a failure detector that guarantees a
time after which correct processes are no longer suspected,
our proof does not hold.
Indeed, in Section \ref{sec:diamondP}, we define a dynamic
version of the eventually perfect failure detector $\diamond
P$~\cite{chandra1996unreliable}, which we call $\diamond P^D$,
and use it to implement a helping mechanism in order to
circumvent the impossibility result. We present an algorithm,
based on state machine replication, that emulates a wait-free
atomic dynamic multi-writer, multi-reader (MWMR) register, and
ensures liveness with unrestricted reconfigurations.
Unlike $\Omega$-based reconfigurable state machine
replication~\cite{SMRlamport2010reconfiguring,SMRvirtually}, our
implementation ensures completion of all operations.

Together, our results pinpoint the property required from an
oracle failure detector for supporting wait-free dynamic
storage. 

\section{Model and Dynamic Storage Problem Definition}

\label{sec:models}

\subsection{Preliminaries}
\label{sub:preliminaries}

We consider an asynchronous message passing system consisting of
an infinite set of processes $\Pi$. Each pair of processes is
connected by a communication link. Processes may fail by crashing
subject to restrictions on the number of failures given below.
A \emph{service} exposes a set of \emph{operations} to its
clients.

An \emph{algorithm} $A$ defines the behaviors of processes as
deterministic state machines, where state transitions are
associated with \emph{actions}, such as send/receive messages,
operation invoke/response, and process failures.
A \emph{global state} is mapping to states from system
components, i.e., processes and links. An \emph{initial
global state} is one where all processes are in initial
states, and all links are empty.
A \emph{run} of algorithm $A$ is a (finite or infinite) 
alternating sequence of global states and actions, beginning
with some initial global state, such that state
transitions occur according to $A$.
We use the notion of
\emph{time $t$ during a run $r$} to refer to the global state
incurred following the $t$\textsuperscript{th} action in $r$.
A \emph{run fragment} is a continuous subsequence of a
run.
An operation invoked before time $t$ in run $r$ is
\emph{complete} at time $t$ if its response event occurs before
time $t$ in $r$; otherwise it is \emph{pending} at time $t$.
We assume that runs are \emph{well-formed}
\cite{linearizability}, in that each process's first action is
an invocation of some operation, and a process does not invoke
an operation before receiving a response to its last invoked
one.
We say that operation $op_i$ \emph{precedes} operation $op_j$ in
a run $r$, if $op_i$'s response occurs before $op_j$'s
invocation  in $r$, and operations $op_i$ and $op_j$ are 
\emph{concurrent} in run $r$, if $op_i$ does not precede
$op_j$ and $op_j$ does not precede $op_i$ in $r$.
A \emph{sequential run} is one with no concurrent operations.
Two runs are \emph{equivalent} if every process performs the
same sequence of invoke and response actions in both (with the
same return values).

\subsection{Dynamic register service}

We consider a dynamic MWMR register
service~\cite{dynastore}, which stores a value $v$ from a
domain $\mathbb{V}$, and offers an interface for invoking
\textit{read}, \textit{write}, and reconfiguration operations.
Initially, the register holds some initial value $v_0 \in
\mathbb{V}$.
We define \emph{Changes} to be the set $\{remove, add\} \times
\Pi$, and call any subset of Changes a \emph{set of changes}.
For example, $ \{ \langle add, p_3 \rangle, \langle remove, p_2
\rangle\}$ is a set of changes.
A \emph{reconfig} operation
takes as a parameter a set of changes and returns ``ok''.
We say that a change $w \in Changes$ completes before time $t$
in a run $r$ , if some \emph{reconfig(c)} completes in $r$
before time $t$ with $w \in c$.
We define $P_0 \subset \Pi$ to be the set of \emph{initial
processes} and say, by convention, that
\emph{reconfig}$(\{ \langle add,p \rangle |p \in P_0 \})$
completes at time 0.
We assume that $P_0$ is fixed and known to all.

\textbf{Notation}~~~
For every subset $w$ of $Changes$, 
the \emph{removal set} of $w$, denoted $w.remove$, is\\
$\{p_i| \langle remove,p_i \rangle \in w \}$; the \emph{join
set} of $w$, denoted $w.join$, is $\{ p_i|  \langle
add,p_i \rangle \in w\}$; and the \emph{membership} of $w$,
denoted $w.members$, is $w.join\setminus w.remove$.  
For a time $t$ in a run $r$, we 
denote by $V(t)$ the union of all sets $q$ s.t.\
\emph{reconfig(q)} completes before time $t$ in $r$.
A \emph{configuration} is a finite set of processes,
and the \emph{current configuration at
time $t$} is $V(t).membership$.
We define $P(t)$ to be the set of
\emph{pending changes} at time $t$ in run $r$, i.e., the set of
all changes included in pending reconfig operations, and
we denote by $F(t)$ the set of processes
that have failed before time $t$ in $r$, initially,
$F(0)=\{\}$.
For a series of sets $S(t)$, we define $S(*) \defeq \bigcup_{t
\in \mathbb{N}} S(t)$.

\textbf{Correct processes and fairness}~~~
A process $p$ is \emph{correct} if $p \in V(*).join \setminus
F(*)$.
A run $r$ is \emph{fair} if
every enabled action by a correct process eventually occurs,
and every message sent by a correct process $p_i$ to a correct
process $p_j$ is eventually received at $p_j$.
A process $p$ is \emph{active} if $p$ is correct and
$p \not\in P(*).remove$.
For simplicity, we assume that a process that has been removed
is not added again.

\textbf{Service specification}~~~
\emph{Atomicity}, also called
\emph{linearizabilty}~\cite{linearizability}, requires that for
every run, after adding some response actions and then removing
invocations that have no response, there exists an equivalent
sequential run that conforms with the operation precedence
relation, and satisfies the service's sequential specification.    
The sequential specification for the register service is as
follows: A read returns the
latest written value or $v_0$ if none was written.

Lamport~\cite{lamportRegular} defines a \emph{safe} single-writer
register. Here, we generalize the definition to multi-writer
registers in a weak way in order to strengthen the
impossibility result:
An MWMR register is \emph{safe}, if in
every sequential run $r$ every \emph{read} $rd$ in $r$ 
returns the register's value when the read was invoked.

A \emph{wait-free} service 
guarantees that every active process's operation
completes, regardless of the actions of other processes.
A \emph{wait-free dynamic atomic storage} is a dynamic
storage service that satisfies atomicity and wait-freedom, and a
\emph{wait-free dynamic safe storage} is one that
satisfies safety and wait-freedom.

\textbf{Fault tolerance}~~~
We now specify conditions on when processes are allowed to fail.
First, we allow processes that are no longer part of the
current configuration's membership to be safely switched off.
To capture this property, we say that a
model is \emph{reconfigurable} if at any time $t$ in a run
$r$, any process in $V(t).remove$ can be in $F(t)$.
In other words, an adversary is allowed to crash any removed
process.
For our lower bound in Section \ref{sec:storageImpossibility},
we define in addition the \emph{minimal failure} condition:
whenever 
$(V(t).members\cup
P(t).join) \cap F(t) = \{\}$, at least one
process from $V(t).membership \cup P(t)$ can fail. In other
words, whenever no unremoved process is 
faulty, the adversary is allowed to fail at least one unremoved
process.

The above two conditions strengthen the adversary, allowing it
to fail processes in some scenarios. For our algorithm in
Section \ref{sec:diamondP}, we need to also restrict the
adversary, so as not to crash too many processes:
We say that a model allows \emph{minority failures} if at all
times $t$ in $r$, fewer than $|V(t).members \setminus
P(t).remove|/2$ processes out of $V(t).members\cup P(t).join$
are in $F(t)$. 

Notice that whenever $|V(t).members
\setminus P(t).remove| \geq 3$, the minority failure condition
allows minimal failure. 

\textbf{Suspicions}~~~
A fundamental property of an asynchronous system is that
failures cannot be accurately detected, and as long as processes
can fail, a correct process can be suspected, in the sense that
its failure is indistinguishable to other processes from slow
delivery of its messages.
To capture this property in other models (e.g., ones with oracles), we define
the following:

A process $p$ \emph{can be suspected at time $t$} if (a) $p$
has not failed before time $t$, (b) $p$ can fail at time $t$
according to the failure model; and (c) for any $t'>t$, every
run fragment lasting from time $t$ to time $t'$ where $p$ fails
at time $t$ is indistinguishable to all other processes from a
run fragment where $p$ is correct but all of its messages are
delayed from time $t$ to time $t'$.

Failure detectors like $\diamond S$ or $\Omega$ in a given
configuration guarantee that there is eventually a time $t$ s.t.\
there is \emph {one} process in the configuration that cannot be
suspected after time $t$. Nevertheless, other process may
continue to be suspected forever:

\begin{observation}

Consider an asynchronous model where processes are equipped with
$\Omega$ or $\diamond S$ in every configuration. Then in every
run, there is some process that can be suspected at any time
when its failure is allowed.

\end{observation}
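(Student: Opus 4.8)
The goal is to show that when every configuration is equipped with $\Omega$ or $\diamond S$, there is still always some process that can be suspected whenever the failure model permits a failure. The plan is to argue by the standard "slow process" indistinguishability construction, leveraging exactly the gap between what $\Omega$/$\diamond S$ guarantee (eventual accuracy for \emph{one} process) and what full accuracy would require (eventual accuracy for \emph{all} correct processes). First I would recall the precise guarantees of the two oracles: in a fixed configuration $C$, $\Omega$ outputs at every process a single leader, and eventually all correct processes in $C$ agree on a common correct leader; $\diamond S$ eventually stops suspecting \emph{some} one correct process, but is allowed to suspect every other correct process infinitely often. In both cases the oracle's output is consistent with a run in which any \emph{non-leader} (resp.\ any process other than the one distinguished process) has crashed.

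The core step is the indistinguishability argument establishing condition (c) of the definition of ``can be suspected.'' Fix a time $t$ at which the failure model allows some failure, and among the processes whose failure is allowed at $t$, choose one that is not the current $\Omega$-leader / not a process the oracle has committed to never suspecting again (there is at least one such process, since if only one unremoved process existed the minimal-failure / reconfigurable conditions would still let us pick it, and if several exist the oracle distinguishes at most one). Call it $p$. Now I would build, for every $t' > t$, two run fragments from time $t$ to $t'$: in the first, $p$ crashes at $t$; in the second, $p$ is correct but every message it sends in $[t,t']$ is delayed past $t'$ and, symmetrically, the oracle modules at all processes $q \neq p$ behave identically (they are allowed to, since $p$ is not their protected process, so outputting values as if $p$ had crashed is legal oracle behavior in both fragments). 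Since no process other than $p$ receives any message from $p$ and no oracle output differs, every process $q \neq p$ goes through the identical sequence of states and actions in the two fragments, which is precisely condition (c). Conditions (a) and (b) hold by the choice of $t$ and $p$.

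The main obstacle — and the point that needs the most care — is ensuring that the ``protected'' process of the oracle can always be chosen to differ from $p$, i.e., that the oracle's eventual-accuracy guarantee can never force \emph{every} failure-allowed process to be unsuspectable. For $\diamond S$ and $\Omega$ this is immediate from their specifications: at most one process is protected. The subtlety is that the current configuration changes over time, and the oracle is re-instantiated per configuration; I would handle this by noting that the definition of ``can be suspected at time $t$'' only concerns run fragments from $t$ onward and the oracle guarantee is asymptotic, so within any bounded fragment $[t,t']$ the oracle is free to output arbitrary (in particular, ``$p$ has crashed'') values at all processes, regardless of configuration. Thus the construction goes through uniformly. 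I would close by remarking that this is exactly where a dynamic eventually \emph{perfect} detector differs: it would eventually protect \emph{all} correct processes, breaking step two of the construction, which is what makes the algorithm of Section~\ref{sec:diamondP} possible.
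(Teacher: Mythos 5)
Your proposal is correct and follows exactly the justification the paper itself gives (the paper states this Observation without a formal proof, relying on the preceding remark that $\Omega$ and $\diamond S$ protect only \emph{one} process from eventual suspicion): you pick a process other than the oracle's distinguished one and run the standard crash-vs-slow indistinguishability argument, with identical oracle outputs in both fragments. Your closing point—that the oracles' guarantees are asymptotic and hence impose no constraint on any finite fragment $[t,t']$—is the right way to handle both the per-configuration re-instantiation and the quantifier order (a \emph{fixed} process suspectable at \emph{every} allowed time), so the argument goes through.
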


Our impossibility result shows that wait-free
dynamic storage emulation is impossible in such models 
even if only reconfigurability and minimal failure are required.

\kern-1em
\section{Related Work}
\label{sec:relatedWork}

Previous works on asynchronous dynamic storage assume either
weak failure detectors like $\diamond S$ and
$\Omega$~\cite{gilbert2003rambo,gilbert2010rambo,RDS2009reconfigurable,baldoni2009implementing,SMRlamport2010reconfiguring,SMRvirtually},
or none at all~\cite{dynastore,ShraerDynadisc}.
Therefore, they are all subject to our impossibility result in
one way or another, as we now explain.

Our minority and reconfigurability failure conditions are based
on DynaStore's~\cite{dynastore,ShraerDynadisc} failure model,
with the difference that we distinguish between removed
processes and failed ones, and thus allow more failures.
In addition, as long as there 
are at least three members in each current or pending
configuration, minimal failures are allowed and so DynaStore is
subject to our impossibility, and indeed, 
guarantees liveness under the assumption that number of
reconfigurations is finite.

RAMBO~\cite{gilbert2010rambo,gilbert2003rambo} and
RDS~\cite{RDS2009reconfigurable} use
on consensus to agree on reconfigurations, while read and write
operations are asynchronous. They only discusses
liveness and fault tolerance in synchronous runs with bounded
churn and no guarantee on
reconfigurability~\cite{gilbert2010rambo}.
A similar liveness condition based on churn are used
in~\cite{gilbert2003rambo,baldoni2009implementing}. Therefore,
these algorithm do not contradict our impossibility result.

Reconfigurable
Paxos variants~\cite{SMRlamport2010reconfiguring,SMRvirtually} 
provide dynamic state machine replication, and
in turn implement dynamic atomic storage. These works are
subject to our impossibility result because they assume $\Omega$
(a leader) in every configuration. A configuration
may be changed, and accordingly a leader may be removed (and
then fail) before a process $p$ (with a pending operation) is
able to communicate with it.
Though a new leader is elected by $\Omega$ in the ensuing
configuration, this scenario may repeat itself indefinitely.
In Section \ref{sec:diamondP}, we augment state-machine
replication with helping based on a stronger failure detector in
order to avoid such scenarios.

A related impossibility proof~\cite{baldoni2009implementing}
shows that liveness is impossible to achieve with failures of
more than a minority in the current configuration, which
in some sense suggests that our minority failure condition is
tight.

\kern-1em
\section{Impossibility of Wait-Free Dynamic Safe Storage}
\label{sec:storageImpossibility}

In this section we prove that there is no implementation of 
wait-free dynamic safe storage in a reconfigurable model that
allows minimal failures and a correct process may be suspected
forever.

\begin{theorem}
\label{theorem:impossibility}

Consider an asynchronous model allowing reconfigurability and
minimal failures, where some correct process can be
suspected at any time when its failure is allowed by the
model.
Then there is no algorithm that emulates a wait-free dynamic
safe storage.
\end{theorem}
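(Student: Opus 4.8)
The plan is to derive a contradiction from the existence of such an algorithm $A$ by constructing an infinite fair run in which some active process's operation never completes, contradicting wait-freedom. The central idea exploits the tension the abstract hints at: a slow correct process is indistinguishable from a failed one, so the system must be able to reconfigure it out and proceed; but then we can play this game forever against a single operation. First I would fix a process $p$ that, by hypothesis, can be suspected at any time its failure is allowed, and have $p$ invoke a $\mathit{write}(v_1)$ with $v_1 \neq v_0$ while all of $p$'s messages are delayed indefinitely. Since $p$ can be suspected, there is a run fragment indistinguishable to everyone else from one where $p$ has crashed at the moment of invocation; in that companion run, by the minimal-failure condition we are allowed to fail $p$, and wait-freedom forces the remaining processes to be able to complete a reconfiguration $\mathit{reconfig}(\{\langle remove, p\rangle, \langle add, q\rangle\})$ for a fresh process $q$ (together with whatever adds are needed to keep the configuration nonempty and to satisfy minimal failure). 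By indistinguishability, the same reconfiguration completes in the run where $p$ is merely slow.

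Next I would iterate: once $p$ has been removed, reconfigurability lets the adversary treat $p$ as switched off, but in our run $p$ is actually still correct with a pending $\mathit{write}$ and only-delayed messages. We now designate the newly added process $q$ as the next ``slow correct'' process — $q$ can be suspected whenever its failure is allowed — invoke a $\mathit{read}$ at $q$, delay $q$'s messages, and by the same argument the other processes must be able to reconfigure $q$ out as well. Crucially, between these reconfiguration epochs we let all non-delayed processes run to quiescence and perform a $\mathit{read}$: by safety, in the sequential run restricted to these non-delayed operations, the $\mathit{read}$ must return the register's current value, and since $p$'s $\mathit{write}(v_1)$ never took effect (all its messages are delayed), the reads keep returning $v_0$. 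Repeating this for infinitely many epochs, each introducing a fresh process that gets added and then removed, yields an infinite run $r$.

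The final step is the bookkeeping to make $r$ a legitimate fair run that still violates wait-freedom. I would schedule message deliveries so that every message between two processes that are both correct and not perpetually-delayed in $r$ is eventually delivered, and every enabled action of such a process occurs — this is possible because the ``delayed'' process in each epoch is distinct, so no single message is postponed forever; the perpetually-slow-then-removed processes are removed, hence need not have their messages delivered to remain consistent with fairness restricted to correct processes. The process $p$, however, is correct (it is in $V(*).join$ and never fails) and $active$ only until it is removed — here is the subtle point I'd need to handle carefully: once $p \in V(*).remove$ it is no longer active, so to contradict wait-freedom I should instead ensure the operation that never completes belongs to a process that stays active. The cleanest fix is to keep $p$ itself in every configuration (never remove $p$), delay only $p$'s messages forever while cycling the ``victim to be reconfigured out'' among auxiliary processes, and observe that $p$'s $\mathit{write}$ is pending forever although $p$ is active. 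Then I must check that $p$ being perpetually slow is consistent with fairness: fairness only requires messages between correct processes to arrive, and $p$ is correct, so $p$'s messages \emph{must} eventually arrive — meaning I cannot delay them forever after all.

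So the real argument, and the main obstacle, is to resolve this circularity: we cannot both keep $p$ active-and-correct and delay its messages forever. The resolution is the standard FLP-style one — we do not delay $p$ forever in a single run, but build $r$ as a limit of finite runs, at each stage extending the delay of $p$'s messages just long enough to force one more reconfiguration to complete (using that $p$ \emph{can be suspected at that time}, so the finite extension is indistinguishable from $p$ having crashed, where the reconfiguration is mandatory by wait-freedom and allowed by minimal failure). In the limit run, $p$'s operation has infinitely many ``reconfiguration events'' interposed before any possible response, so it never completes; yet $r$ is fair because each of $p$'s messages is eventually delivered (just later and later) and each auxiliary victim is removed after finitely many steps. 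The hard part is thus the careful diagonalization showing that the required reconfiguration at each stage genuinely exists (invoking wait-freedom in the companion crashed-$p$ run and transporting it via indistinguishability) and that interleaving these stages preserves both fairness and safety of the emulated register; the $\Omega$/$\diamond S$ case is then immediate from the Observation, since the suspected process can be taken to be whichever process those detectors permit to be suspected at each stage.
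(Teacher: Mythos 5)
Your overall strategy matches the paper's: use suspicion/indistinguishability to force reconfigurations to complete while a victim's messages are withheld, and derive a contradiction with wait-freedom via safety. You also correctly isolate the central difficulty — the apparent circularity between keeping $p$ active-and-correct (so that wait-freedom applies and fairness constrains its messages) and withholding $p$'s messages forever. But your resolution of that difficulty is where the argument breaks.

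Your proposed fix is an FLP-style limit in which each of $p$'s messages is ``eventually delivered (just later and later).'' This destroys the very property you need. The reason the write can be argued never to complete is a safety argument (the paper's Lemma~\ref{Lem:storageImpossibilityL2}): if $write(v_1)$ completed in a fragment during which \emph{no} process receives any message from $p$, one could splice in a read, by indistinguishability with the run where the write was never invoked, that returns a stale value — contradicting safety. Once you allow $p$'s messages to be delivered, even very late, recipients are no longer in a state indistinguishable from the no-write run, and nothing prevents the algorithm from then completing the write; your claim that ``infinitely many reconfiguration events interposed before any possible response'' forces non-completion is an assertion, not an argument — wait-freedom is exactly the guarantee that the operation completes anyway. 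The paper's actual resolution is different and you had the ingredient in hand without applying it to $p$: after each epoch, the processes removed from the configuration are also \emph{failed} (reconfigurability permits this). Consequently, in the limit run no process other than $p$ is correct, so the fairness condition — which only requires delivery of messages between \emph{correct} processes — imposes no obligation to ever deliver any of $p$'s messages. There is no circularity to diagonalize away: $p$ stays active and correct, all its enabled actions occur, its messages are never received by anyone, and Lemma~\ref{Lem:storageImpossibilityL2} then shows the write cannot complete. You should replace the limit construction with this ``remove, then crash, then rename and repeat'' scheme, and make the non-completion step rest explicitly on the safety-based lemma rather than on the accumulation of reconfigurations.
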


\begin{proof}[Proof (Theorem~\ref{theorem:impossibility})]
Assume by contradiction that such an algorithm $A$ exists. We
prove two lemmas about $A$.

\begin{myframe}
\begin{lemma}
\label{Lem:storageImpossibilityL1}

Consider a run $r$ of $A$ ending at time $t$, and
some process $p_i$ that can be suspected at time $t$.
Consider an active process $p_j$ in $r$ that invokes operation
$op$ at time $t$. 
Then there exists an extension of $r$ where (1) $op$ completes
at some time $t' > t$,(2) no process receives a message from
$p_i$ between $t$ and $t'$, and (3) $p_i$ does not fail. 

\end{lemma}

\begin{proof}[Proof (Lemma~\ref{Lem:storageImpossibilityL1})]

Consider a run fragment that begins at time $t$, in which
$p_i$ fails at time $t$ and all of its in transit messages
are lost. By wait-freedom, $op$ eventually
completes at some time $t'$. Since $p_i$ fails at time $t$
and all its outstanding messages are lost, in the run
fragment $\sigma_1$ starting from the global state at time $t$
and ending when $op$ is complete, no
process receives any message from $p_i$.
Now let $\sigma_2$ be
another run fragment lasting from time $t$ to time $t'$, in
which $p_i$ does not fail, but all of its messages are
delayed.
Recall that $p_i$ can be suspected at time $t$, therefore, 
$\sigma_1$ and $\sigma_2$ are indistinguishable to all
processes except $p_i$. Thus, $op$ returns also in $\sigma_2$.

\renewcommand{\qedsymbol}{\qedlemma{Lem:storageImpossibilityL1}}
\end{proof}

\begin{lemma}
\label{Lem:storageImpossibilityL2}

Consider a sequential run $r$ of $A$ ending at time $t$, where
some correct process $p_i$ can be suspected at any time in $r$
and there is some active process $p_j \neq p_i$ in $r$.
Assume that no process invokes $write(v_1)$ for some $v_1 \neq
v_0$ in $r$.
If we extend $r$ so that $p_i$ invokes $w=write(v_1)$ at
time $t$, and $w$ completes at some time $t' > t$, 
then in the run fragment between $t$ and $t'$, some process $p_k
\neq p_i$ receives a message sent by $p_i$. 

\end{lemma}

\begin{proof}[Proof (Lemma~\ref{Lem:storageImpossibilityL2})]

Assume by way of contradiction that \textit{w} completes at
some point $t'$, and in the run fragment between $t$
and $t'$ no process $p_k \neq p_i$ receives a message sent by
$p_i$.
Consider some other run $r'$ that is identical to $r$ until
time $t'$ except that $p_i$ does not invoke \textit{w} at time
$t$.
Now assume that process $p_j$ invokes a $read$
operation $rd$ at time $t'$ in $r'$. By the assumption,
$p_i$ can be suspected at $t'$.
Therefore, by Lemma \ref{Lem:storageImpossibilityL1}, there is
a run fragment $\sigma$ of $r'$ beginning at time $t'$, where
$rd$ completes at some time $t''$, and no process receives a
message from $p_i$ between $t'$ and $t''$.
Since no other process invokes $write(v_1)$ in $r'$, $rd$
returns some $v_2 \neq v_1$.
Now notice that all global states from time $t$ to time $t'$ in
$r$ and $r'$ are indistinguishable to all processes
except $p_i$.
Thus, we can continue the run $r$ with an invocation of read
operation $rd'$ by $p_j$ at time $t'$, and appending $\sigma$
to it.
Operation $rd'$ hence, completes and returns $v_2$. A
contradiction to safety.

\renewcommand{\qedsymbol}{\qedlemma{Lem:storageImpossibilityL2}}
\end{proof}
\end{myframe}

To prove the theorem, we construct an
infinite fair run $r$ in which a \textit{write}
operation of an active process never completes, in
contradiction to wait-freedom. An illustration of the run for
$n=4$ is presented in Figure \ref{fig:illustration1}.

Consider some initial global state $c_0$, s.t.\
$P(0)=\{\}$ and $V(0).members=\{p_1\ldots p_n\}$.
By the assumption, there is some process $p$ that can be
suspected at any time $t$ when its failure is allowed.
Assume w.l.o.g. that this process is $p_1$,
and let it invoke write operation \textit{w} at
time 0. Let $t_1=0$. Now repeatedly do the following:

Let process $p_n$ invoke
\emph{reconfig(q)} where $q=\{\langle add,p_j \rangle | n+1 \leq
j \leq 2n-2\}$ at time $t_1$.
Since $(V(t_1).members\cup P(t_1).join) \cap
F(t_1) = \{\}$, $p_1$ can fail according to the minimal failures
condition, and therefore, by our assumption, can
be suspected at time $t_1$.
So by Lemma \ref{Lem:storageImpossibilityL1}, we can extend $r$
with a run fragment $\sigma_1$ ending at some time $t_2$ when
\textit{reconfig(q)} completes, no process $p_j
\neq p_1$ receives a message from $p_1$ in $\sigma_1$, and $p_1$
does not fail.

Then, at time $t_2$, $p_n$ invokes \emph{reconfig($q'$)}, where
$q'=\{\langle remove,p_j \rangle | 2\leq j \leq n-1\}$. 
Again, $(V(t_2).members\cup P(t_2).join) \cap
F(t_2) = \{\}$, and therefore, by our assumptions $p_1$
can be suspected at time $t_2$. And again, by Lemma
\ref{Lem:storageImpossibilityL1}, we can extend $r$ with a run
fragment $\sigma_2$ ending at some time $t_3$, when
\textit{reconfig(q)} completes, no process $p_j
\neq p_1$ receives a message from $p_1$ in $\sigma_2$, and
$p_1$ does not fail. 

Recall that we assume a reconfigurable model, so all the
processes in $V(t_3).remove$ can be now added to $F(t_3)$.
Therefore, let the process in $\{p_j \mid 2\leq j \leq n-1 \}$
fail at time $t_3$, and notice that the fairness condition does
not mandate that they receive messages from $p_1$. Next, allow
$p_1$ to perform all its enabled actions till some time $t_4$.
 
Now notice that at $t_4$, $|V(t_4).members|=n$, 
$P(t_4)=\{\}$, and $(V(t_4).members\cup
P(t_4).join) \cap F(t_4) = \{\}$. We can rename 
the processes in $V(t_4).members$ (except $p_1$) so that 
the process that performed the remove and add operations becomes
$p_2$, and all other get names in the range $p_3\ldots p_n$. 
We can then repeat the
construction above.
By doing so infinitely many
times, we get an infinite run $r$ in which $p_1$ is active
and no process ever receives a message from $p_1$. 
However, all of $p_1$'s enabled actions eventually occur.
Since no process except $p_1$ is correct in $r$, the run is
fair.
In addition, since $(V(t).members\cup
P(t).join) \cap F(t) = \{\}$ for all $t$ in $r$, by the minimal
failures condition, $p_1$ can fail at any time $t$ in $r$.
Hence, by the theorem's assumption, can be suspected at any
time $t$ in $r$.
Therefore, by Lemma \ref{Lem:storageImpossibilityL2}, $w$ does
not complete in $r$, and we get a violation of wait-freedom.

\renewcommand{\qedsymbol}{\qedthm{theorem:impossibility}}
\end{proof}

\kern-1em

\begin{figure}[ht]      
  \centering 
  \includegraphics[width=4in]{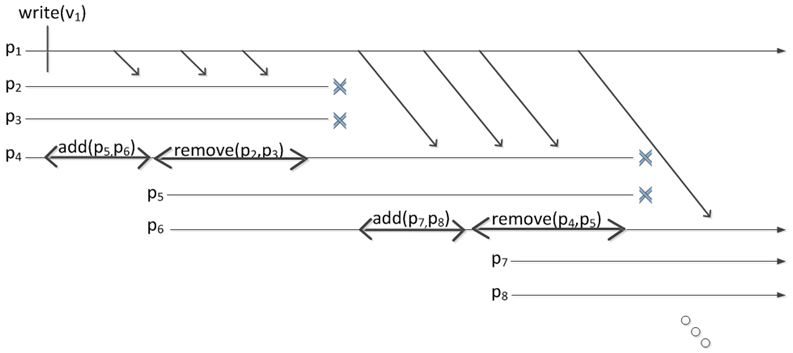}
  \caption[Illustration of the infinite run $r$ ]
   {Illustration of the infinite run for $n=4$.}
   \label{fig:illustration1}
\end{figure}

\kern-2em
\section{Oracle-Based Dynamic Atomic Storage}
\label{sec:diamondP}

We present an algorithm that circumvents
the impossibility result of Section
\ref{sec:storageImpossibility} using a failure detector.
In this section we a assume reconfigurable model with the
minority failure condition.
In Section \ref{sub:oracleDef}, we define a dynamic eventually
perfect failure detector. In Section
\ref{sub:oracleAlg}, we describe an algorithm, based on dynamic
state machine replication, that uses the failure detector to
implement a dynamic atomic MWMR register.
The algorithm's correctness is proven in Appendix
\ref{App:AppendixA}.

\subsection{Dynamic failure detector}
\label{sub:oracleDef} 

Since the set of processes is potentially infinite, we cannot
have the failure detector report the status of all processes as
static failure detectors typically do. Dynamic failure detectors
addressing this issue have been defined in previous works, either
providing a set of
processes that have been excluded from or included into the
group~\cite{lin1999failure}, or assuming that there is
eventually a fixed set of participating
processes~\cite{chockler2001failure}.

In our model, we do not assume that there is eventually a 
fixed set of participating processes, as the number of
reconfigurations can be infinite. And we do not want the failure
detector to answer with a list of processes, because in
dynamic systems, this gives additional information about 
participating processes that could have been unknown to the
inquiring process. Instead, our dynamic failure detector
is queried separately about each process.
For each query, it answers either \emph{fail} or \emph{ok}.
It can be wrong for an unbounded period, but for each process,
it eventually returns a correct answer. 

Formally, a \emph{dynamic eventually perfect} failure detector,
$\Diamond P^D$, satisfies two properties:

\begin{description}
  \item \textbf{Strong Completeness:} For each process $p_i$
  that fails at time $t_i$, there is a time $t>t_i$ s.t. the
  failure detector answers \emph{fail} to every
  query about $p_i$ after time $t$.

  \item \textbf{Eventual Strong Accuracy:} There exists a time
  $t$, called the \emph{stabilization time}, s.t.\ the failure
  detector answers \emph{ok} to every query at time $t'>t$ about
  a correct process that was added before time $t'$.

\end{description}
 
\subsection{Dynamic storage algorithm}
\label{sub:oracleAlg}

\subsubsection{Algorithm overview}

\textbf{State machine emulation of register}
We use a state machine \emph{sm} to emulate
a wait-free atomic dynamic register, \emph{DynaReg}. Every
process has a local replica of \emph{sm}, and we use
consensus~\cite{consensus1980reaching} to agree on \emph{sm}'s
state transitions. Notice that each process is equipped with a
failure detector FD of class $\Diamond P^D$, so consensus is
solvable under the assumption of a correct majority in a given
configuration.

Each consensus runs in a given configuration $c$, exposes a
\emph{propose} operation, and responds with \emph{decide},
satisfies the following properties: By \emph{Uniform
Agreement}, every two decisions are the same. 
By \emph{Validity}, every decision was previously proposed by
one of the processes in $c$. 
By \emph{Termination}, if a majority of $c$ is correct, then
eventually every correct processes in $c$ decides.
We further assume that a consensus instance does not decide
until a majority of the members of the configuration propose in
it.

The $sm$ (presented in lines
\ref{line:smBegin}-\ref{line:smEnd} in Algorithm
\ref{algOracle:definitions}) keeps track of dynaReg's value in a
variable $val$, and the configuration in a variable $cng$,
containing both a list of processes, $cng.mem$, and a set of
removed processes, $cng.rem$.
Write operations change $val$, and reconfig operations change
$cng$.
A consensus decision may bundle a number of operations to
execute as a single state transition of $sm$. The number of
state transitions executed by $sm$ is stored in the variable
$ts$.
Finally, the array $lastOps$ maps every process $p$ in $cng.mem$
to the sequence number (based on $p$'s local count) of $p$'s last
operation that was performed on the emulated DynaReg together
with its result.

Each process partakes in at most one consensus at a time; this
consensus is associated with timestamp $sm.ts$ and runs in
$sm.cng.mem$.
In every consensus, up to \emph{$|sm.cng.mem|$}
ordered operations on the emulated DynaReg are agreed upon, and 
\emph{sm}'s state changes according to the agreed operations. 
A process's $sm$ may change either when consensus decides or
when the process receives a newer $sm$ from another process, in
which case it skips forward. So \emph{sm} goes through the same
states in all the processes, except when skipping forward.
Thus, for every two processes $p_k,p_l$, if
$sm_k.ts=sm_l.ts$, then $sm_k=sm_l$. (A subscript $i$ indicates
the variable is of process $p_i$.)

\textbf{Helping}
The problematic scenario in the impossibility proof of Section
\ref{sec:storageImpossibility} occurs because of endless
reconfigurations, where a slow process is never able to
communicate with members of its configuration before they are
removed.
In order to circumvent this problem, we use the FD to implement
a helping mechanism.
When proposing an operation, process $p_i$ tries to help other
processes in two ways: first, it helps them
complete operations they may have successfully proposed in
previous rounds but have not learned about their outcome; and
second, it proposes their new operations.
First, it sends its \emph{sm} to all other processes in
\emph{$sm_i.cng.mem$}, and waits for each to reply with its
latest invoked operation. Then $p_i$
proposes all the operations together.
Processes may fail or be removed, so $p_i$ cannot wait for
answers forever.
To this end, we use the FD.
For every process in \emph{$sm_i.cng.mem$} that has not been
removed, $p_i$ repeatedly inquires FD and waits either for a
reply from the process or for an answer from the FD that the
process has failed.
Notice that the strong completeness property guarantees that
$p_i$ will eventually continue, and strong accuracy guarantees
that every slow active process will eventually receive
help in case of endless reconfigurations.

Nevertheless, if the number of reconfigurations is finite, it
may be the case that some slow process is not familiar with any
of the correct members in the current configuration, and no
other process performs an operation (hence, no process is
helping).
To ensure progress in such cases, every correct process
periodically sends its $sm$ to all processes in its
$sm.cng.mem$

\textbf{State survival}
Before the reconfig operation can complete, the new $sm$
needs to propagate to a majority of the new configuration, in
order to ensure its survival.
Therefore, after executing the state transition, $p_i$ sends
$sm_i$ to $sm_i.cng$ members and waits until it either receives
acknowledgements from a majority or learns of a newer $sm$.
Notice that in latter the case, consensus in $sm_i.cng.mem$ has
decided, meaning that at least a majority of $sm_i.cng.mem$ have
participated in it, and so have learned of it.

\textbf{Flow example}
The algorithm flow is illustrated in Figure
\ref{fig:FlowAlgOracle}. In this example, a slow process $p_2$
invokes operation $op_{21}$ before
the FD's stabilization time, ST.
Process $p_1$ invokes operation $op_{11}=\langle add, p_3 
\rangle$ after ST. It first sends \emph{helpRequest} to $p_2$
and waits for it to reply with \emph{helpReply}. Then it
proposes $op_{21}$ and $op_{11}$ in a consensus. When
\emph{decide} occurs, $p_1$ updates its $sm$, sends it to all
processes, and waits for majority. 
Then $op_{11}$ returns and $p_1$ fails before $p_2$ receives
its update message. Next, $p_3$ invokes a \emph{reconfig}
operation, but this time when $p_2$ receives \emph{helpRequest}
with the up-to-date $sm$ from $p_3$, it notices that its
operation has been performed, and $op_{21}$ returns.

\begin{figure}[H]         
  \centering  
  \includegraphics[width=6.2in]{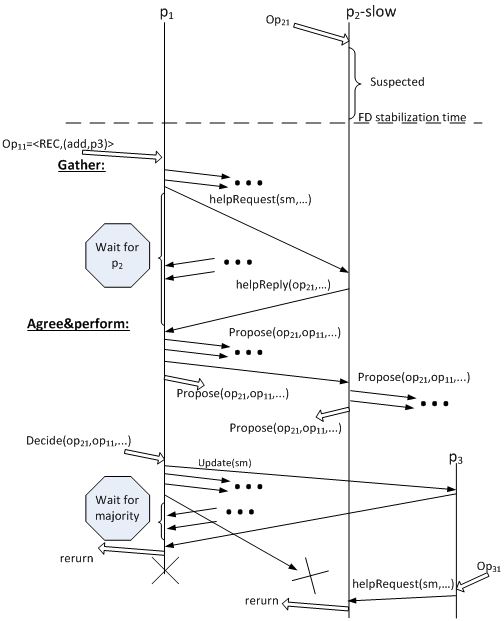} 
  \caption[Algorithm flow]
   {Flow illustration: process $p_2$ is slow.
   After stabilization time, process $p_1$ helps it by
   proposing its operation. Once $p_2$'s operation is decided,
   it is reflected in every up-to-date $sm$.
   Therefore, even if $p_1$ fails before informing $p_2$, $p_2$
   receives from the next process that performs an operation,
   namely, $p_3$, an $sm$ that reflects its operation, and thus
   returns.
   Line arrows represent messages, and block arrows represent
   operation or consensus invocations and responses.}
   \label{fig:FlowAlgOracle}
\end{figure}

\subsubsection{Detailed description}

The data structure of process $p_i$ is given in Algorithm
\ref{algOracle:definitions}. First, $sm_i$, is described above.
Integer $opNum_i$ holds the sequence number of $p_i$'s current operation; $ops_i$ is a
set that contains operations that 
need to be completed for helping; the flag $pend_i$ is a
boolean that indicates whether or not $p_i$ is participating in
an ongoing consensus; and $myOp_i$ is the latest operation
invoked at $p_i$.

The algorithm of process $p_i$ is presented in Algorithms
\ref{algOracle:operation} and \ref{algOracle:messages}.
We execute every event handler, (operation invocation, message
receiving, and consensus decision), atomically excluding wait
instructions; that is,
other event handlers may run after the handler completes or
during a wait (lines
\ref{lineOracle:gatheE},\ref{line:wait2},\ref{lineOracle:op7} in
Algorithm \ref{algOracle:operation}).
The algorithm runs in two phases. The first, \emph{gather}, is
described in Algorithm \ref{algOracle:operation} lines
\ref{lineOracle:gatheB}-\ref{lineOracle:gatheE} and in Algorithm
\ref{algOracle:messages} lines
\ref{lineOracle:gatherHReqB}-\ref{lineOracle:gatherHreqRE},
\ref{lineOracle:gatherHrepRB}-\ref{lineOracle:gatherHrepRE}.
Process $p_i$ first increases its operation number $opNum_i$,
writes $op$ together with $opNum_i$ to the set of operations
$ops_i$, and sets $myOp_i$ to be $op$. Then it sends
$\langle$``helpRequest''$,\ldots \rangle$ to every member of
$A=sm_i.cng.mem$ (line \ref{lineOracle:op1}), and waits for each
process in $A$ that is not suspected
by the FD or removed to reply with
$\langle$``helpReply''$,\ldots \rangle$. Notice that $sm_i$ may
change during the wait because messages are handled, and $p_i$
may learn of processes that have been removed.

When $\langle$``helpRequest''$,num,sm \rangle$ is received by
process $p_j \neq p_i$, if the received \emph{sm} is 
newer than $sm_j$, then process $p_j$ adopts \emph{sm} and
abandons any previous consensus. Either way, $p_j$ sends
$\langle$``helpReply''$,\ldots\rangle$ with its current
operation $myOp_j$ in return.

Upon receiving $\langle$``helpReply''$,opNum_i,op,num \rangle$
that corresponds to the current operation number $opNum_i$,
process $p_i$ adds the received operation \emph{op}, its
number \emph{num}, and the identity of the sender to the set
$ops_i$.

At the end of this phase, process $p_i$ holds a set of
operations, including its own, that it tries to agree on in the
second phase.
Note that $p_i$ can participate in at most one consensus per
timestamp, and its propose might end up not being the decided
one, in which case it may need to propose the same operations
again.
Process $p_i$ completes $op$ when
it discovers that $op$ has been performed in $sm_i$, whether by
itself or by another process.

The second phase appears in Algorithm
\ref{algOracle:operation} lines
\ref{lineOracle:agreeB}-\ref{lineOracle:agreeE}, and in
Algorithm \ref{algOracle:messages} lines
\ref{lineOracle:decideB}-\ref{lineOracle:decideE},
\ref{lineOracle:proposeB}-\ref{lineOracle:proposeE}.
In line \ref{lineOracle:agreeB}, $p_i$ checks if its operation
has not been completed yet. In lines
\ref{lineOracle:op4} to \ref{lineOracle:op5}, it waits until it
does not participate in any ongoing consensus
($pend_i$=\emph{false}) or some other process helps it
complete $op$.
Recall that during a wait, other events can be handled. 
So if a message with an up-to-date $sm$ is received during the
wait, $p_i$ adopts the $sm$.
In case $op$ has been completed in $sm$, $p_i$ exits the main
while (line \ref{lineOracle:op5}).
Otherwise, $p_i$ waits until either it does not participate in
any ongoing consensus. This can be the
case if (1) $p_i$ has not proposed yet, (2) a message with
a newer $sm$ was received and a previous consensus was
subsequently abandoned, or (3) a $decide$ event has been
handled.
In all cases, $p_i$ marks that it now
participates in consensus in line \ref{lineOracle:op6}, prepares
a new request $Req$ with the operations in $ops_i$ that
have not been performed yet in $sm_i$ in
line \ref{lineOracle:op7}, proposes $Req$ in the consensus
associated with $sm_i.ts$, and sends
$\langle$``propose''$,\ldots\rangle$ to all the members of
$sm_i.cng.mem$.

When $\langle$``propose''$,sm,Req\ldots \rangle$ is received by
process $p_j \neq p_i$,
if the received $sm$ is more updated
than $sm_j$, then process $p_j$ adopts $sm$, abandons any
previous consensus, proposes $Req$ in the consensus associated
with $sm.ts$, and forwards the message to all other members of
$sm_j.cng.mem$.
The same is done if $sm$ is identical to $sm_j$ and $p_j$ has
not proposed yet in the consensus associated with
$sm_j.ts$.
Otherwise, $p_j$ ignores the message.

The event $decide_i(sm.cng,sm_i.ts,Req)$ indicates a
decision in the consensus associated with
$sm_i.ts$.
When this occurs, $p_i$ performs all the operations in
$Req$ and changes $sm_i$'s state.
It sets the value of the emulated DynaReg, $sm_i.value$, to be
the value of the \emph{write} operation of the process with the
lowest id, and updates $sm_i.cng$ according to
the \emph{reconfig} operations.
In addition, for every 
$\langle p_j,op,num \rangle \in Req$, $p_i$ writes to
$sm_i.lastOps[j]$, $num$ and $op$'s response, which is ``ok'' in
case of a \emph{write} or a \emph{reconfig}, and $sm_i.value$ in
case of a \emph{read}.
Next, $p_i$ increases $sm_i.ts$ and sets $pend_i$
to false, indicating that it no longer participates in any
ongoing consensus.

Finally, after $op$ is performed, $p_i$ exits the main while. If
$op$ is not a \emph{reconfig} operation, then $p_i$ returns the
result, which is stored in $sm_i.lastOps[i].res$. Otherwise,
before returning, $p_i$ has to be sure that a majority of
$sm_i.cng.mem$ receives $sm_i$. It sends
$\langle$``update''$,sm,\ldots\rangle$ to all the processes in
$sm_i.cng.mem$ and waits for $\langle$``ACK''$,\ldots\rangle$
from a majority of them.
Notice that it may be the case that there is no such correct
majority due to later reconfigurations and failures, so,
$p_i$ stops waiting when a more updated $sm$ is received, which
implies that a majority of $sm_i.cng.mem$ has already received
$sm_i$ (since a majority is needed in order to solve consensus).

Upon receiving $\langle$``update''$,sm,num\rangle$ with a new
$sm$ from process $p_i$, process $p_j$ adopts $sm$ and abandons
any previous consensus. In addition, if $num \neq \perp$, $p_j$
sends $\langle$``ACK''$,num\rangle$ to $p_i$ (Algorithm
\ref{algOracle:messages} lines
\ref{lineOracle:updateB}-\ref{lineOracle:updateE}).

Beyond handling operations, in
order to ensure progress in case no operations are invoked from
some point on, every correct process periodically sends
$\langle$``update''$,sm,\perp\rangle$ to all processes in its
$sm.cng.mem$ (Algorithm \ref{algOracle:operation} line
\ref{lineOracle:periodically}).

In Appendix \ref{App:AppendixA}, we prove that the algorithm
satisfies atomicity and wait-freedom.

\begin{algorithm}[H]
 \caption{Data structure of process $p_i$}  
 \label{algOracle:definitions}
\begin{algorithmic}[1]
\small

\State $sm_i.ts \in \mathbb{N}$, initially 0
\label{line:smBegin}
\State $sm_i.value \in\mathcal{V}\cup \{\perp\}$, initially
$\perp$
\State $sm_i.cng=\langle mem,rem \rangle$,
\Statex where $mem,rem \subset \Pi$, initially $\langle P_0, \{\}
\rangle$, where $P_0 \subset \Pi$

\State $sm_i.lastOps$ is a vector of size
$|sm_i.cng.mem|$, 
\Statex where $\forall p_j \in sm_i.cng.mem$,
$sm_i.lastOps[j] = \langle num,res \rangle$,
\Statex where $num \in
\mathbb{N}$, $res \in \mathcal{V}\cup \{\perp,$``ok''$\})$,
initially $\langle 0, \perp \rangle$
\label{line:smEnd}

\State $pend_i \in$ \emph{\{true,false\}}, initially \emph{false}
\State $opNum_i \in \mathbb{N}$, initially 0
\State $ops_i \subset \Pi \times operation \times
\mathbb{N}$ , initially $\{ \}$
\Statex where $operation = \{\langle RD\rangle,\langle
WR, value \rangle,\langle REC, changes \rangle, \perp
\}$, 
\Statex where $value \in \mathcal{V}$, $changes \subset
\{add,remove \} \times \Pi$, and \emph{type} can be
$RD,WR,REC$
\State $myOp_i \in operation$, initially $\perp$

\end{algorithmic}
\end{algorithm}

\begin{algorithm}[H]
 \caption{Process $p_i$'s algorithm: performing operations}  
\label{algOracle:operation}
\begin{algorithmic}[1]
\small

\Upon{invoke operation($op$)}
	\State $opNum_i \leftarrow opNum_i +1$
		\Comment{ \textbf{phase 1:} \textit{gather}}
		\label{lineOracle:gatheB}
	\State $ops_i \leftarrow \{ \langle p_i, op, opNum_i \rangle \}$
	\State $myOp_i \leftarrow  op$
	\State $A \leftarrow sm_i.cng.mem$
	\State \textbf{for all} $p \in A$ send $\langle
	$``helpRequest''$,opNum_i,sm_i\rangle$ to $p$
		\label{lineOracle:op1} 
	\State  \textbf{for all} $p \in A$ \textbf{wait}
	for $\langle$``helpReply''$,opNum_i,\ldots\rangle$
	from $p$ or $p$ is suspected or $p \in sm_i.cng.rem$
	\label{lineOracle:gatheE}
	\State \textbf{while} $sm_i.lastOps[i].num \neq opNum_i$
		\label{lineOracle:agreeB}
		\Comment{\textbf{phase 2:} \textit{agree\&perform}} 
	\State \hspace*{0.5cm} \textbf{while} (\emph{$pend_i$})
		\label{lineOracle:op4}  
	\State \hspace*{1cm} \textbf{wait} until $\neg pend_i$ or 
	$sm_i.lastOps[i].num=opNum_i$ 
		\label{line:wait2}
	\State \hspace*{1cm} \textbf{if} $sm_i.lastOps[i].num=opNum_i$ 
	\textbf{then} \textbf{goto} line \ref{lineOracle:goto}
		\label{lineOracle:op5} 
	\State \hspace*{0.5cm} $pend_i \leftarrow true$ 
		\label{lineOracle:op6}
	\State \hspace*{0.5cm} $Req \leftarrow \{ \langle p_j,op,num
	\rangle \in ops_i$ $|$ $num > sm_i.lastOps[j].num \}$
		\label{lineOracle:op7}
	\State \hspace*{0.5cm} $propose(sm_i.cng,sm_i.ts, Req)$
		\label{lineOracle:propose}
	\State \hspace*{0.5cm} \textbf{for all} $p \in sm_i.cng.mem$
	send $\langle$``propose''$,sm_i,Req \rangle$ to $p$
	
	\State \textbf{if} $op.type =  REC $
	\label{lineOracle:goto}
	\State \hspace*{0.5cm} $ts \leftarrow sm_i.ts$
	\State \hspace*{0.5cm} \textbf{for all} $p \in sm_i.cng.mem$
	send $\langle$``update''$,sm_i,opNum_i \rangle$ to $p$\
	\label{line:oracleMajRec}
	\State \hspace*{0.5cm} \textbf{wait} for
	$\langle$``ACK''$,opNum_i\rangle$  from  majority
	of $sm_i.cng.mem$ or $sm_i.ts > ts$
	\label{lineOracle:op7}
	
	\State \textbf{return} $sm_i.lastOps[i].res$
	\label{lineOracle:op8}

\EndUpon
\label{lineOracle:agreeE} 

\Statex

\State \textbf{periodically:}
\State \hspace*{0.5cm} \textbf{for all} $p \in sm_i.cng.mem$
send $\langle $``update''$,sm_i,\perp\rangle$ to $p$
	\label{lineOracle:periodically}

\end{algorithmic}
\end{algorithm}

\begin{algorithm}[H]
 \caption{ Process $p_i$'s algorithm: event handlers}  
 \label{algOracle:messages}
\begin{algorithmic}[1]
\small

\Statex

  \Upon{$decide_i(sm_i.cng,sm_i.ts, Req)$}
 	\label{lineOracle:decideB}
 	
 	\State $W \leftarrow \{\langle p,value \rangle |
 	\langle p,\langle WR,value \rangle, num \rangle \in Req \}$
 	
 	\State \textbf{if} $W \neq \{\}$
 	
 	\State \hspace*{0.5cm} $sm_i.value \leftarrow$ value associated
 	with smallest $p$ in $W$
  	
	\State \textbf{for all} $\langle p,op,num \rangle \in Req$
		\State \hspace*{0.5cm} \textbf{if} \emph{op.type = WR}
			\State \hspace*{1cm} $sm_i.lastOps[j] \leftarrow
			\langle num,$``ok''$\rangle$ 
		\State \hspace*{0.5cm} \textbf{else if}
		\emph{op.type = RD} 
			\State \hspace*{1cm} $sm_i.lastOps[j] \leftarrow
			\langle num, sm_i.value \rangle$
		\State \hspace*{0.5cm} \textbf{else}
			\State \hspace*{1cm} $sm_i.cng.mem \leftarrow \{p|
			(p \in sm_i.cng.mem$ \& $ \langle remove,p \rangle
			\notin op.changes)$ $||$
			\Statex \hspace*{1.6cm} $(\langle add,p
			\rangle \in op.changes$ \& $p \notin sm_i.cng.rem )\}$
			\label{lineOracle:dec1}
			\State \hspace*{1cm} $sm_i.cng.rem \leftarrow \{p| (p
			\in sm.cng.rem$ $||$ $\langle remove,p \rangle
			\in Changes \}$
			\label{lineOracle:dec2}
			\State \hspace*{1cm} $sm_i.lastOps[j] \leftarrow
			\langle num,$``ok''$\rangle$
			
			\State $sm_i.ts \leftarrow sm_i.ts +1$
			\State $pend_i \leftarrow$ \emph{false} 
 
  \EndUpon
  \label{lineOracle:decideE}

\Receiving{$\langle$``propose''$,sm,Req \rangle$ from $p_j$}
\label{lineOracle:proposeB}
	\State \textbf{if} ($sm_i.ts > sm.ts$) or ($sm_i.ts = sm.ts$
	\& $pend_i = true$) \textbf{then} return
	
	\State  $sm_i \leftarrow sm$ 
	\State  $pend_i \leftarrow true$ 
	\State  $propose(sm_i.cng,sm_i.ts, Req)$
	\State  \textbf{for all} $p \in sm_i.cng.mem$
	send $\langle$``propose''$,sm_i,Req \rangle$ to $p$
\EndReceiving
\label{lineOracle:proposeE}
  
  
\Receiving{$\langle$``helpRequest''$,num,sm\rangle$ from $p_j$}
	\label{lineOracle:gatherHReqB}
	\State \textbf{if} $sm_i.ts < sm.ts$ \textbf{then} 
	\State \hspace*{0.5cm} $sm_i \leftarrow sm$
	\State \hspace*{0.5cm} $pend_i \leftarrow$ \emph{false}
	\State send  $\langle$``helpReply''$,num,myOp_i,opNum_i \rangle$
  
\EndReceiving
\label{lineOracle:gatherHreqRE}


\Receiving{$\langle$``helpReply''$,opNum_i,op,num \rangle$
from $p_j$} 
\label{lineOracle:gatherHrepRB}
	\State $ops_i \leftarrow ops_i \cup \langle p_j,op,num \rangle $
\EndReceiving
\label{lineOracle:gatherHrepRE} 


\Receiving{$\langle$``update''$,sm,num \rangle$ from $p_j$} 
\label{lineOracle:updateB}
	\State \textbf{if} $sm_i.ts < sm.ts$ \textbf{then} 
	\State \hspace*{0.5cm} $sm_i \leftarrow sm$
	\State \hspace*{0.5cm} $pend_i \leftarrow$ \emph{false}
	\State \textbf{if} $num \neq \perp$ \textbf{then}
	send $\langle$``ACK''$,num\rangle$ to $p_j$ 
\EndReceiving 
\label{lineOracle:updateE}

\end{algorithmic}
\end{algorithm}

\section{Conclusion}
\label{sec:discussion}

We proved that in an asynchronous reconfigurable
model allowing at least one failure,
and no restriction on the number of reconfigurations, there is
no emulation of dynamic wait-free storage. This is true even for
safe storage, and even if processes are equipped with $\Omega$
or $\diamond S$ failure detectors, which allow them to solve
consensus in every configuration.
We further showed how to circumvent this result using a
dynamic eventually perfect failure detector: we presented an
algorithm that uses such a failure detector in order to emulate a wait-free
dynamic atomic MWMR register. 

Our results thus draw a distinction between models where correct
processes can be suspected at any time (as long as they may
fail), and ones where false suspicions eventually cease.

\newpage
\appendix
\section{Correctness Proof}
\label{App:AppendixA}

Here we prove the correctness of our algorithm (Section
\ref{sec:diamondP}).

\subsection{Atomicity}

Every operation is uniquely defined by the process that
invoked it and its local number. During the proof we refer to
operation $op$ invoked by process $p_i$ with local number
$opNum_i$ as the tuple $\langle p_i,op,opNum_i\rangle$, or simply
as $opNum_i$. We begin the proof with three lemmas that link
completed operation to $sm$ states.

\begin{lemma}
\label{lem:oracle1}

Consider  operation $op$ is invoked by some process $p_i$ in $r$
with local number $opNum_i$. If $op$ returns in $r$ at
time $t$, then there is at least one request \emph{Req} that
contains $\langle p_i,op,opNum_i\rangle$ and has been chosen in 
a consensus in $r$ before time $t$.   

\end{lemma}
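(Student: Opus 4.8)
The plan is to trace the value stored in $sm_i.lastOps[i]$ at the moment $op$ returns back to a consensus decision. First, observe from Algorithm~\ref{algOracle:operation} that $p_i$ reaches the \textbf{return} statement (line~\ref{lineOracle:op8}) only after the main \textbf{while} loop (with guard at line~\ref{lineOracle:agreeB}) is left, which happens either because its guard $sm_i.lastOps[i].num \neq opNum_i$ becomes false, or via the \texttt{goto} in line~\ref{lineOracle:op5} (taken only when $sm_i.lastOps[i].num = opNum_i$). In both cases there is a time $t_0 \le t$ at which $sm_i.lastOps[i].num = opNum_i$. Since $opNum_i \ge 1$ (it is incremented on every invocation) whereas every $lastOps$ entry starts at $\langle 0,\perp\rangle$, the $i$-th entry of $sm_i$ must have been modified before $t_0$.

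The main step is the following invariant, proved by induction over the steps of $r$: whenever a process $p_j$ has $sm_j.lastOps[k].num = m$ with $m > 0$, or an in-transit message carries a state $sm$ with $sm.lastOps[k].num = m > 0$, some consensus instance has already decided on a request $Req$ containing a tuple $\langle p_k, op', m\rangle$. The only write to $lastOps$ is in the $decide$ handler of Algorithm~\ref{algOracle:messages}, which, for each $\langle p_k, op', num\rangle \in Req$, sets $sm.lastOps[k]$ to $\langle num, \cdot\rangle$; there $Req$ is exactly the value decided by the consensus instance that fired the handler, so the invariant holds for states produced this way. The only other modification of $sm_j$ is wholesale adoption of an $sm$ received in a \emph{helpRequest}, \emph{propose}, or \emph{update} message with a strictly larger timestamp, but such an $sm$ is a copy of the sender's state at the (earlier) moment it sent the message, so the induction hypothesis at that earlier moment supplies the required decision, which therefore precedes the current step. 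Instantiating the invariant with $p_j = p_i$, $k = i$, $m = opNum_i$ at time $t_0$ gives a consensus that decided, before $t_0 \le t$, on a request $Req$ containing $\langle p_i, op', opNum_i\rangle$ for some operation $op'$.

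It remains to argue that $op' = op$. For this I would establish a second invariant: every tuple $\langle p_i, op', m\rangle$ that ever appears in some $ops_j$, in an in-transit message, or in a proposed or decided request, has $op'$ equal to the operation invoked by $p_i$ with local number $m$ --- which is unique by well-formedness. A tuple whose first component is $p_i$ is created only by $p_i$ itself: in the \emph{invoke} handler, where $ops_i$ is reinitialized to $\{\langle p_i, op, opNum_i\rangle\}$ right after $opNum_i$ is incremented and $myOp_i$ is set to the invoked $op$ (line~\ref{lineOracle:gatheB} and the two lines following it); and in a \emph{helpReply}, where $p_i$ answers with $\langle p_i, myOp_i, opNum_i\rangle$, and $myOp_i$ is, by the same bookkeeping, the operation tied to the current value of $opNum_i$. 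From there the tuple travels unchanged: a \emph{propose} message is re-sent verbatim, and the $Req$ assembled in phase~2 is a subset of $ops_i$. Hence the decided $Req$ above contains $\langle p_i, op, opNum_i\rangle$, which proves the lemma.

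I expect the first invariant to be the delicate point: the induction must cover every avenue by which $sm_j$ (or a message payload) can obtain a fresh $lastOps[k]$ value, and the ``adopted from an earlier-sent message'' case must be handled so that the appeal to the induction hypothesis is to a strictly earlier point in $r$ and does not become circular. Stating the invariant uniformly over process states and channel contents is what makes the argument go through.
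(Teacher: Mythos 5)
Your proof is correct and follows essentially the same route as the paper's: trace the condition $sm_i.lastOps[i].num=opNum_i$ needed for the return back through $sm$ adoptions to a write in a \emph{decide} handler, which can only occur when the decided request contains the tuple. The paper compresses your first invariant into an ``easy to show by induction'' remark and leaves your second invariant (that the tuple's operation component is indeed $op$) implicit, so your write-up is just a more careful elaboration of the same argument.
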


\begin{proof}

Operation $op$ cannot return until $sm_i.lastOps[i].num=opNum_i$
(line \ref{lineOracle:agreeB} or \ref{line:wait2} in
Algorithm \ref{algOracle:operation}). 
Processes update $sm$ during a decide handler, or when newer
$sm$ is received. Easy to show by induction that some process
$p_j$ writes $opNum_i$ to $sm_i.lastOps[i].num$ during a
decide handler. According to the run of the decide
handler, $opNum_i$ is written to $sm.lastOps[i].num$ only if the
chosen request in the
corresponding consensus contains $\langle
p_i,op,opNum_i\rangle$.

\end{proof}

\begin{lemma}
\label{lem:oracle3}

For every two processes $p_i,p_j$. Let $t$ be a time in $r$ in
which neither $p_i$ or $p_j$ executing decide handler. Then at
time $t$, if $s.m_i.ts=sm_j.ts$, then $sm_i=sm_j$.

\end{lemma}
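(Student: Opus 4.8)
The plan is to prove a stronger invariant by induction over the events of the run $r$: there is a function $T \mapsto S_T$ (where $S_T$ is defined as the first $sm$-value with timestamp $T$ to appear anywhere in $r$) such that every $sm$-value with timestamp $T$ that occurs as $sm_k$ for a process $p_k$ that is \emph{not} currently inside a decide handler, or as the $sm$-field of a message in transit, equals $S_T$. The lemma then follows by instantiating this invariant at $sm_i$ and $sm_j$ at time $t$, since by hypothesis neither process is executing a decide handler at time $t$.

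For the base case, all processes start with the identical initial $sm$ (timestamp $0$, value $\perp$, configuration $\langle P_0,\{\}\rangle$, zeroed $lastOps$) and no messages are in transit, so I set $S_0$ to this value. For the inductive step I would examine the possible actions. Invoke actions and the periodic step do not transform any $sm$; the send statements (in the gather/agree/propose/update handlers and the periodic step) place a copy of the sender's current $sm_k$ into a new message, but those handlers are not decide handlers, and within them $sm_k$ is only ever assigned a value received in some message, which by the induction hypothesis is canonical, so the new message carries a canonical $sm$-value. A receive action that adopts a strictly newer $sm$ sets $sm_k$ to the received $sm$-field, which by the induction hypothesis equals the appropriate $S_{T'}$; otherwise it leaves $sm_k$ unchanged.

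The remaining case is a decide handler, the only place where $sm_k$ is transformed and $sm_k.ts$ incremented. Suppose $p_k$ runs the handler for $decide_k(sm_k.cng,sm_k.ts,Req)$ with $sm_k.ts=T$. Immediately before the handler, $p_k$ is not inside a decide handler, so by the induction hypothesis $sm_k=S_T$; in particular $sm_k.cng=S_T.cng$, so the decision comes from the consensus instance identified by the pair $(S_T.cng,T)$. The same argument applies to every process that decides at timestamp $T$, so they all decide in this one instance, and by Uniform Agreement they all obtain the same $Req$. The decide handler applies a deterministic transformation $f$ to $sm_k$ using only $Req$ (setting $val$ from the lowest-id writer in $Req$, updating $cng$ from the reconfigs in $Req$, writing $lastOps$ entries, and incrementing $ts$), so the resulting value $f(S_T,Req)$ depends only on $T$. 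If no timestamp-$(T+1)$ value has yet appeared I set $S_{T+1}\defeq f(S_T,Req)$; otherwise every previously-seen timestamp-$(T+1)$ value was produced either by such a decide handler (hence equals $f(S_T,Req)$) or by adopting it from a message that transitively traces back to such a handler, so it equals $S_{T+1}$ as well. This closes the induction.

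The main obstacle is the bookkeeping that makes the induction airtight: (a) verifying that every $sm$-value ever placed in a message is canonical, given that the non-decide handlers both read and overwrite $sm_k$ before sending; (b) justifying that all processes deciding at a given timestamp participate in the \emph{same} consensus instance — this is where Uniform Agreement, together with the ``each process partakes in at most one consensus at a time'' assumption, is essential; and (c) handling the mid-decide exclusion cleanly so the statement genuinely holds exactly when neither $p_i$ nor $p_j$ is in a decide handler. The determinism of $f$ as a function of $Req$, by contrast, is immediate from inspecting the decide handler.
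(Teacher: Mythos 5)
Your proof is correct and takes essentially the same route as the paper: an induction showing each timestamp has a unique canonical $sm$, with Uniform Agreement plus the determinism of the decide handler forcing all deciders at timestamp $T$ to produce the same successor state. The only difference is presentational — the paper inducts on timestamps and dismisses the message-propagation case as "easy to show by induction," whereas you strengthen the invariant to explicitly cover $sm$-values in transit and induct over run events, which is a more airtight version of the same argument.
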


\begin{proof}

We prove by induction on timestamps. Initially, all 
correct processes have the same $sm$ with timestamp 0.
Now consider timestamp $TS$, and assume that for every two
processes $p_i,p_j$ at any time not during the
execution of decide handlers, if $sm_i.ts=sm_j.ts=TS$,
then $sm_i=sm_j$. Processes increase their $sm.ts$ to $TS+1$
either at the end of a \emph{decide} handler associated with
$TS$ or when they receive a message with $sm$ s.t. $sm.ts=TS+1$.
By the agreement property of consensus and by the
determinism of the algorithm, all the processes that perform the
\emph{decide} handler associated with $TS$, perform the same
operations, and therefore move $sm$ (at the end of the handler)
to the same state.
It is easy to show by induction that all the processes
that receive a message with $sm$ s.t. $sm.ts=TS+1$, receive the
same $sm$. The lemma follows.

\end{proof}

\begin{observation}
\label{obs:mon}

For any two states $sm_1$, $sm_2$, and for any process $p_i$ in
a run $r$,
if $sm_1.ts \geq sm_2.ts$, then $sm_1.lastOps[i].num \geq
sm_2.lastOps[i].num$.

\end{observation}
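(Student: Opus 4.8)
The plan is to prove the statement by reducing it to a monotonicity claim about the canonical chain of $sm$ states and then arguing about a single state transition. First I would recall, from Lemma~\ref{lem:oracle3}, that for each timestamp $k \in \mathbb{N}$ there is (essentially) a unique $sm$ state with $sm.ts = k$ that can be observed in $r$; call it $S_k$. Since every event handler --- in particular every \emph{decide} handler --- executes atomically, every value the variable $sm$ takes at any process at any time in $r$ is one of these $S_k$'s, with timestamp field exactly $k$. So it suffices to show that $S_k.lastOps[i].num$ is non-decreasing in $k$: then, given $sm_1,sm_2$ with $sm_1.ts \ge sm_2.ts$, we have $sm_1 = S_{sm_1.ts}$ and $sm_2 = S_{sm_2.ts}$, and the claim follows.

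Next I would prove the monotonicity of $S_k.lastOps[i].num$ by induction on $k$, focusing on the single transition $S_k \to S_{k+1}$. This transition is performed inside the \emph{decide} handler (Algorithm~\ref{algOracle:messages}, lines~\ref{lineOracle:decideB}--\ref{lineOracle:decideE}) associated with the consensus instance of timestamp $k$, on input the chosen request $Req$. I would use the \emph{Validity} property of that consensus instance to conclude that $Req$ was proposed into it; a process proposes into the instance associated with its own $sm.ts$, so any such proposer had $sm.ts = k$ and hence, by Lemma~\ref{lem:oracle3}, $sm = S_k$ at proposal time. Tracing $Req$ back further, it was produced (in the line computing $Req \leftarrow \{\langle p_j,op,num\rangle \in ops \mid num > sm.lastOps[j].num\}$) by some process whose $sm$ was $S_k$ --- forwarding of ``propose'' messages copies $Req$ verbatim and is only done by processes with the matching timestamp, so it does not change this. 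Consequently every triple $\langle p_i,op,num\rangle \in Req$ satisfies $num > S_k.lastOps[i].num$. Now inspecting the \emph{decide} handler: for process $p_i$ it either leaves $sm.lastOps[i]$ untouched (if $Req$ contains no triple for $p_i$) or overwrites $sm.lastOps[i].num$ with some $num > S_k.lastOps[i].num$; either way $S_{k+1}.lastOps[i].num \ge S_k.lastOps[i].num$.

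The main obstacle I anticipate is the provenance argument for $Req$: I need to be confident that the request decided in consensus instance $k$ was actually filtered against $S_k.lastOps$, and not against an older $lastOps$, which requires combining consensus Validity with Lemma~\ref{lem:oracle3} and a small invariant that a process only ever proposes into, or forwards a ``propose'' for, the consensus instance matching its current $sm.ts$. A secondary, purely bookkeeping point is justifying that the observable $sm$ states are exactly the $S_k$'s, which rests on the atomicity of handlers (no ``propose'' or ``update'' message carries a partially-updated $sm$, since the \emph{decide} handler contains no wait and runs atomically). With these in hand the induction closes and the observation follows immediately.
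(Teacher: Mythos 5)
Your proof is correct and follows the same route the paper intends: the paper's proof is just ``easy to show by induction,'' and your argument is exactly that induction on timestamps, carried out in full (uniqueness of the state $S_k$ per timestamp via Lemma~\ref{lem:oracle3}, consensus Validity to trace the decided $Req$ back to a proposer that filtered it against $S_k.lastOps$, and inspection of the \emph{decide} handler for the step $S_k \to S_{k+1}$). No gaps; if anything, your elaboration is more careful than the paper's.
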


\begin{proof}

Easy to show by induction.

\end{proof}

\begin{lemma}
\label{lem:oracle4}

Consider operation $op$ invoked in $r$ by some process $p_i$
with local number $opNum_i$ in $r$. Then $op$ is part of at
most one request that is chosen in a consensus in $r$. 

\end{lemma}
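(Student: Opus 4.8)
The plan is to argue by contradiction: suppose operation $op = \langle p_i, op, opNum_i \rangle$ is contained in two distinct chosen requests $Req_1$ and $Req_2$, decided in consensus instances associated with timestamps $TS_1$ and $TS_2$ respectively, with (without loss of generality) $TS_1 \le TS_2$. Since the two requests are chosen in consensus instances associated with these timestamps, a $decide$ handler associated with $TS_1$ writes $opNum_i$ to $sm.lastOps[i].num$, bumping the timestamp to $TS_1 + 1$; similarly for $TS_2$. The key point I would exploit is that the algorithm only ever proposes $\langle p_i, op, opNum_i \rangle$ in line \ref{lineOracle:op7}, where $Req$ is formed as $\{\langle p_j, op, num\rangle \in ops_i \mid num > sm_i.lastOps[j].num\}$, so a process proposes $opNum_i$ only when its local $sm_i.lastOps[i].num < opNum_i$.

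First I would handle the case $TS_1 = TS_2$: by Uniform Agreement the two decisions in that single consensus instance are identical, so $Req_1 = Req_2$, contradicting distinctness. So assume $TS_1 < TS_2$. By Validity, $Req_2$ was proposed by some process $p_k$ in the consensus associated with $TS_2$; that propose happened either in line \ref{lineOracle:propose} or inside the ``propose'' message handler, and in both cases $p_k$ had $sm_k.ts = TS_2$ at the moment of proposing (in the ``propose'' handler because $sm$ is adopted and the consensus is associated with $sm.ts$). Now, since $Req_1$ was chosen at timestamp $TS_1 < TS_2$, by Lemma~\ref{lem:oracle1}'s underlying mechanism and an easy induction, every $sm$ with timestamp at least $TS_1 + 1$ satisfies $sm.lastOps[i].num \ge opNum_i$; combined with Lemma~\ref{lem:oracle3} (all $sm$'s with the same timestamp agree, outside decide handlers) and Observation~\ref{obs:mon} (monotonicity of $lastOps[i].num$ in $ts$), any $sm$ with $ts \ge TS_2 > TS_1$ has $sm.lastOps[i].num \ge opNum_i$. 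But for $p_k$ to include $\langle p_i, op, opNum_i\rangle$ in $Req_2$ it must have had $opNum_i > sm_k.lastOps[i].num$ with $sm_k.ts = TS_2$, a contradiction.

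The remaining gap to fill carefully is the claim that once some $decide$ handler associated with $TS_1$ writes $opNum_i$ to $lastOps[i]$, every subsequent $sm$ state of timestamp $\ge TS_1+1$ carries $lastOps[i].num \ge opNum_i$. This follows by induction on timestamps using Lemma~\ref{lem:oracle3}: all processes executing the $decide$ for $TS_1$ write the same thing (determinism plus Agreement), so the canonical state at timestamp $TS_1 + 1$ has $lastOps[i].num = opNum_i$ (it cannot be overwritten to something smaller within a single decide handler since $num$ only increases), and for higher timestamps Observation~\ref{obs:mon} gives the monotonicity; states adopted by message receipt are copies of these canonical states. The main obstacle is being meticulous about ``times during a decide handler'': the invariants of Lemma~\ref{lem:oracle3} and the canonical-state argument hold only at times when the relevant process is not mid-handler, so I would phrase the argument in terms of the state at the completion of each decide handler and note that proposals (line \ref{lineOracle:op7} and the ``propose'' handler) read $sm$ only outside decide handlers, so the comparison $opNum_i > sm_k.lastOps[i].num$ is evaluated against a canonical state and the contradiction is genuine.
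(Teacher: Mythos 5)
Your proposal is correct and follows essentially the same route as the paper's proof: take the earliest consensus instance in which the operation is chosen, observe that the corresponding decide handler sets $sm.lastOps[i].num$ to $opNum_i$, propagate this via Lemma~\ref{lem:oracle3} and Observation~\ref{obs:mon} to all $sm$'s with higher timestamps, and use consensus Validity together with the filter in line~\ref{lineOracle:op7} to conclude that no later proposal can contain the operation. Your explicit treatment of the equal-timestamp case via Uniform Agreement and your care about states read outside decide handlers are minor refinements of the same argument, not a different approach.
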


\begin{proof}

Assume by way of contradiction that $op$ is part of more than
one request that is chosen in a consensus in $r$. Now consider
the earliest one, $Req$, and assume that it is chosen in a
consensus associated with timestamp $TS$. At the end of the
\emph{decide} handler associated with timestamp $TS$,
$sm.lastOps[i].num=opNum_i$ and  it is increased to $TS+1$.
Thus, by Lemma \ref{lem:oracle3} $sm.lastOps[i].num=opNum_i$
holds for every $sm$ s.t.\ $sm.ts=TS+1$. 
Consider now the
next request, $Req_1$, that contains $op$, and is chosen
in a consensus. Assume that this consensus associated with ts
$TS'$, and notice that $TS' > TS$.
By the validity of consensus, this request is
proposed by some process $p_j$, when $sm_j.ts$ is equal to
$TS'$. 
By Observation \ref{obs:mon}, $sm.lastOps[i].num=opNum_i$ holds
for all $sm$ s.t.\ $sm.ts=TS'$, and therefore $p_j$ does not
enter $op$ to $Req_1$ (line \ref{lineOracle:op7} in Algorithm
\ref{algOracle:operation}). A contradiction.

\end{proof}

Based on the above lemmas, we can define, for each run $r$,
a linearization $\sigma_r$, where operations are ordered as they
are chosen for execution on $sm$'s in $r$.

\begin{definition}

For a run $r$, we define the sequential
run $\sigma_r$ to be the sequence of operations decided in
consensus instances in $r$,
ordered by the order of the chosen requests they are part of in
$r$.
The order among operations that are part of the same chosen
request is the following: first all  \emph{writes}, then all
\emph{reads}, and finally, all \emph{reconfig}
operations.
Among each type, operations are ordered by the process ids of the
processes that invoked them, from the highest to the lowest.

\end{definition}

\begin{corollary}
\label{cor:wellDef}

For every run $r$, the sequential execution $\sigma_r$ is well
defined.
That is, $\sigma_r$ contains every completed operation in
$r$ exactly once, and every invoked operation at most once.

\end{corollary}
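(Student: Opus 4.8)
The plan is to verify the two assertions of the corollary — that every completed operation appears exactly once in $\sigma_r$, and that every invoked operation appears at most once — essentially by reading off consequences of Lemmas~\ref{lem:oracle1} and~\ref{lem:oracle4}. First I would check that $\sigma_r$ is a well-defined sequence at all: operations are listed in the order of the chosen requests they belong to, and within a single chosen request the order is fixed deterministically (writes, then reads, then reconfigs, each group ordered by descending process id), so once we know each operation lies in at most one chosen request the listing is unambiguous. The only subtlety here is that a single consensus decision can bundle several operations into one request; the definition's intra-request tie-break handles exactly that, so there is nothing to prove beyond noting it.

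Next I would establish ``at most once'' for every invoked operation. Let $op = \langle p_i, op, opNum_i\rangle$ be any invoked operation. By Lemma~\ref{lem:oracle4}, $op$ is part of at most one request that is chosen in a consensus in $r$; hence $op$ contributes at most one entry to $\sigma_r$. (If $op$ is never part of any chosen request, it simply does not appear.) This immediately gives the second claim.

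Then I would establish ``exactly once'' for every completed operation. Suppose $op = \langle p_i, op, opNum_i\rangle$ returns in $r$ at some time $t$. By Lemma~\ref{lem:oracle1}, there is at least one request $Req$ containing $op$ that has been chosen in a consensus in $r$ before time $t$; so $op$ appears at least once in $\sigma_r$. Combined with the ``at most once'' bound from Lemma~\ref{lem:oracle4}, $op$ appears exactly once. I would also remark, for completeness, that $\sigma_r$ contains \emph{only} operations that were genuinely invoked in $r$: by the validity property of consensus every chosen request was proposed by some process, and a process only ever proposes (via line~\ref{lineOracle:op7}) operations it placed in its $ops_i$ set, i.e.\ its own invoked operation or operations reported to it in \emph{helpReply} messages, which were in turn invoked by their senders.

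I do not expect a genuine obstacle here — the corollary is a bookkeeping consequence of the three preceding lemmas — but the point that needs the most care is making sure the tie-breaking rule inside a chosen request really does yield a total order, i.e.\ that no chosen request contains two operations with the same type and the same invoking process. That follows because a process has at most one operation outstanding at a time (well-formedness) and each invocation gets a fresh $opNum_i$, so distinct operations of the same process carry distinct identifiers and the ``by process id'' tie-break suffices once we also note the request is a set of distinct $\langle p_j, op, num\rangle$ triples.
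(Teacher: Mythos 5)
Your proposal is correct and follows essentially the same route as the paper's (elided) argument: Lemma~\ref{lem:oracle4} gives ``at most once,'' Lemma~\ref{lem:oracle1} gives ``at least once'' for completed operations, and combining them yields the corollary. Your extra observations---that the intra-request tie-break is a genuine total order thanks to well-formedness, and that validity of consensus ensures only invoked operations appear---are sound and merely make explicit what the paper leaves implicit.
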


In order to prove atomicity it remains to
show that (1) $\sigma_r$ preserves $r$'s real time order; and (2)
every \emph{read} operation $rd$ in $r$ returns the value that
was written by the last \emph{write} operation that precedes
$rd$ in $\sigma_r$, or $\perp$ if there no such operation.

\begin{lemma}
\label{lem:oracle2}

If operation $op_1$ returns before operation $op_2$ is invoked in
$r$, then $op_1$ appears before $op_2$ in $\sigma_r$.

\end{lemma}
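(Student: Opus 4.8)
The plan is to show that if $op_1$ returns before $op_2$ is invoked, then $op_1$ is chosen in a consensus (equivalently, written into some $sm$) strictly before $op_2$ is chosen, which by the definition of $\sigma_r$ places $op_1$ before $op_2$. Say $op_1$ is invoked by $p_i$ with local number $opNum_i$ and $op_2$ by $p_j$ with local number $opNum_j$. First I would use Lemma~\ref{lem:oracle1}: since $op_1$ returns at some time $t_1$, there is a request $Req_1 \ni \langle p_i,op_1,opNum_i\rangle$ chosen in a consensus before $t_1$, say the one associated with timestamp $TS_1$. By Lemma~\ref{lem:oracle4} this is the \emph{unique} chosen request containing $op_1$, so $op_1$'s position in $\sigma_r$ is determined by $TS_1$ (and the tie-breaking rule within $Req_1$). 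Symmetrically, $op_2$ (if it ever returns, or more generally if it is ever chosen) sits in the unique request $Req_2$ with timestamp $TS_2$. It therefore suffices to prove $TS_1 < TS_2$, or $TS_1 = TS_2$ with $op_1$ ordered before $op_2$ inside the common request; I expect the latter case not to arise, and the argument should in fact give $TS_1 < TS_2$ outright.

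The key step is to trace what $p_j$ knows when it starts $op_2$. When $op_2$ is invoked at $p_j$ at some time $t_2 > t_1$, $p_j$ executes the gather phase and then proposes; by Lemma~\ref{lem:oracle1} applied to $op_2$, $op_2$ is chosen in $Req_2$ in a consensus associated with $sm.ts = TS_2$, and by the guard in line~\ref{lineOracle:op7} of Algorithm~\ref{algOracle:operation} (and the analogous propagation in the ``propose'' handler), $op_2$ can only have been entered into a request proposed when the proposer's $sm.ts = TS_2$; since a consensus does not decide until a majority of that configuration has proposed in it, $TS_2$ is at least the timestamp of $p_j$'s $sm$ at the moment $op_2$ was first placed into a proposed request. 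So I need a lower bound on $sm_j.ts$ from time $t_2$ onward. Here is where $op_1$'s completion is used: $op_1$ returning at $t_1$ means $sm$ reached a state with timestamp $> TS_1$ and $lastOps[i].num = opNum_i$ (the decide handler at $TS_1$ increments $ts$ to $TS_1+1$); and for a \emph{reconfig}, line~\ref{lineOracle:op7}/line~\ref{line:oracleMajRec} additionally force a majority of $sm.cng.mem$ to have received an $sm$ with $ts \ge TS_1 + 1$ before returning. The main obstacle is exactly this: I must argue that $p_j$, before proposing $op_2$ into a request, necessarily holds an $sm$ with $ts \ge TS_1 + 1$.

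For the non-reconfiguration case the cleanest route is: the consensus at timestamp $TS_1$ decided, so by ``a consensus does not decide until a majority of the members propose in it'' together with Lemma~\ref{lem:oracle3} (all $sm$'s with the same $ts$ agree), a majority $M$ of the configuration at timestamp $TS_1$ moved to timestamp $\ge TS_1 + 1$. When $p_j$ later runs the gather phase for $op_2$ and subsequently proposes, it is running in \emph{some} configuration along the chain; I would argue by induction along the sequence of decided consensus instances that configurations overlap in a majority (each reconfig changes $cng$ but the decide for the reconfig happened after a majority of the \emph{old} configuration learned of it), so any proposer for $op_2$ has heard — directly or transitively via the $sm$ gossip/help/propose/update messages — an $sm$ with $ts \ge TS_1 + 1$, whence $TS_2 \ge TS_1 + 1 > TS_1$. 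If $op_1$ is a reconfig this is immediate from the explicit majority wait in lines~\ref{line:oracleMajRec}--\ref{lineOracle:op7}. Finally, $TS_1 < TS_2$ gives that $op_1$'s chosen request strictly precedes $op_2$'s in the order defining $\sigma_r$, so $op_1$ precedes $op_2$ in $\sigma_r$, as required. I would also remark that the within-request tie-break is irrelevant here since $TS_1 \neq TS_2$.
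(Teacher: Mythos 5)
Your overall strategy (locate $op_1$'s unique chosen request via Lemmas~\ref{lem:oracle1} and~\ref{lem:oracle4}, then show $op_2$'s chosen request comes strictly later) is sound, but the step you identify as ``the main obstacle'' is where the argument breaks, and it is also an obstacle you do not need to overcome. You try to prove $TS_1<TS_2$ by lower-bounding the timestamp of the $sm$ held by whoever proposes $op_2$, arguing via majority intersection of successive configurations that any proposer of $op_2$ must already hold an $sm$ with $ts\ge TS_1+1$. That claim is false for the \emph{first} proposal containing $op_2$: a slow $p_j$ may still hold a stale $sm$ (say $sm_j.ts=0$), run its gather phase against the members of that stale configuration (the gather wait is governed by the FD and $sm_j.cng.rem$, not by contact with a majority of the current configuration), and propose $op_2$ in a long-since-decided consensus instance. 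Nothing forces a proposer to be up to date; stale proposals simply lose. So the quantity you bound (the proposer's $sm.ts$) is not the quantity that determines $op_2$'s position in $\sigma_r$ --- what matters is the timestamp of the consensus whose \emph{decision} contains $op_2$, and your chain of reasoning does not reach that.

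The paper's proof sidesteps all of this in two lines, because $\sigma_r$ is defined by the real-time order in which requests are \emph{chosen} in $r$, not by reasoning about proposers' states. By Lemma~\ref{lem:oracle1}, $Req_1\ni op_1$ is chosen before $op_1$ returns, hence before $op_2$ is invoked. By validity of consensus, any chosen request containing $\langle p_j,op_2,opNum_j\rangle$ must have been proposed, and no process can place $op_2$ into a proposal before $op_2$ is invoked; hence no request chosen at or before the choice of $Req_1$ contains $op_2$, and $op_2$'s chosen request (if any) is chosen strictly after $Req_1$. That immediately gives the ordering in $\sigma_r$. I recommend dropping the timestamp/majority-intersection machinery entirely and arguing directly from the choosing order.
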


\begin{proof}

Operation $op_1$ returns before operation $op_2$ is invoked in
$r$. Therefore By Lemma \ref{lem:oracle1}, $op_1$ is part of a
request $Req_1$ that is chosen in a consensus before $op_2$ is
invoked, and thus $op_2$ cannot be part of $Req_1$ or any other
request that is chosen before $Req_1$. Hence $op_1$ appears
before $op_2$ in $\sigma_r$.

\end{proof}

\begin{lemma}
\label{lem:oracle5}

Consider read operation $rd$ invoked by some process
$p_i$ with local number $opNum_i$ in $r$,
which returns a value $v$. Then $v$ is written by the last
write operation that precedes $rd$ in $\sigma_r$, or
$v=\perp$ if there is no such operation.

\end{lemma}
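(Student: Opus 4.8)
The plan is to trace the value $v$ returned by $rd$ back through the state-machine machinery. By Lemma~\ref{lem:oracle1}, since $rd$ returns, there is a request $Req$ containing $\langle p_i, rd, opNum_i\rangle$ chosen in some consensus associated with a timestamp $TS$; by Lemma~\ref{lem:oracle4} this $TS$ is unique. I would first observe that when the $decide$ handler for $TS$ runs, $p_i$ sets $sm_i.lastOps[i] \leftarrow \langle opNum_i, sm_i.value\rangle$, where $sm_i.value$ is the value of $sm$ after applying all the \emph{write} operations in $Req$ (which are applied before the \emph{read} entries are recorded, per the handler's structure and the ordering in the definition of $\sigma_r$). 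So $v$ equals the value held by $sm$ at timestamp $TS+1$ (more precisely, the value recorded into $lastOps[i]$ during that handler, which by Lemma~\ref{lem:oracle3} is the same for every process executing the $TS$ handler). The returned value is exactly $sm_i.lastOps[i].res$ at the time $rd$ returns, and that is never overwritten by a later decide because of Lemma~\ref{lem:oracle4} and Observation~\ref{obs:mon}.

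Next I would identify which \emph{write} operation supplied that value. Two cases: either $Req$ itself contains one or more \emph{write} operations, or it contains none. In the first case, the $decide$ handler sets $sm.value$ to the value of the \emph{write} by the process with the smallest id in $W$; by the ordering rule in the definition of $\sigma_r$ (writes first, then reads, and among writes from highest to lowest process id), that smallest-id \emph{write} is precisely the last \emph{write} preceding $rd$ within the block of operations coming from $Req$, and no operation from a later chosen request can precede $rd$ in $\sigma_r$. In the second case ($W = \{\}$), $sm.value$ is unchanged by the $TS$ handler, so $v$ is the value of $sm$ at timestamp $TS$; by a straightforward induction on timestamps, the value of $sm$ at any timestamp equals the value written by the last \emph{write} operation in any chosen request up to that timestamp — i.e., the last \emph{write} in $\sigma_r$ strictly before the $Req$ block — or $\perp$ if none. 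Either way, $v$ is the value of the last \emph{write} preceding $rd$ in $\sigma_r$, or $\perp$ if there is none.

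Concretely, I would carry the induction as a separate auxiliary claim: for every state $sm$ reached at timestamp $T$ in $r$, $sm.value$ is the value of the last \emph{write} operation appearing in $\sigma_r$ among the first $T$ chosen requests, or $\perp$ if no \emph{write} appears there. The base case $T=0$ is immediate since $sm.value = \perp$ initially and no requests have been chosen. For the step, by Lemma~\ref{lem:oracle3} all processes executing the decide handler for timestamp $T$ move $sm$ to the same state, and processes that jump to $T{+}1$ via a received message adopt that same $sm$ (shown by the sub-induction already used in Lemma~\ref{lem:oracle3}); so it suffices to examine one decide handler, where $sm.value$ is updated to the smallest-id write's value iff $Req$ contains a write, and left unchanged otherwise — which matches the ordering convention in the definition of $\sigma_r$. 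Then the lemma follows by applying this claim at $T = TS$ (no-write case) or reading off the handler directly (write case).

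The main obstacle I expect is bookkeeping the ordering convention inside a single chosen request and making sure it is consistent with ``last \emph{write} before $rd$'': I must check that within the $Req$ block, the smallest-id \emph{write} is genuinely the $\sigma_r$-latest \emph{write} before the (single, since a process invokes one operation at a time) \emph{read} entry for $p_i$, and that no \emph{read} entry can sit between two \emph{write} entries of the same block. A secondary subtlety is arguing that the value $p_i$ actually returns coincides with the value recorded in $lastOps[i]$ at timestamp $TS+1$ and is not clobbered: this needs Lemma~\ref{lem:oracle4} (uniqueness of the chosen request containing $rd$) together with Observation~\ref{obs:mon} to rule out a later handler rewriting $lastOps[i].num$ to the same $opNum_i$ with a different result. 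Both are routine once set up carefully, so the real content is the induction on $sm.value$ above.
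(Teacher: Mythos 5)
Your proposal is correct and follows essentially the same route as the paper's proof: identify the unique chosen request containing $rd$ via Lemmas~\ref{lem:oracle1} and \ref{lem:oracle4}, observe that the decide handler records $sm.value$ into $lastOps[i]$, and then case-split on whether that request contains a write. Your explicit auxiliary induction on $sm.value$ is just a cleaner packaging of the paper's second and third cases (which argue the same invariant implicitly by tracking the last request containing a write), so there is nothing substantive to flag.
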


\begin{proof}

By Lemmas \ref{lem:oracle1} and \ref{lem:oracle4}, $rd$ is part
of exactly one request $Req_1$ that is chosen in a consensus,
associated with some timestamp $TS$. Thus
$sm.lastOps[i]$ is set to $\langle opNum_i, sm.value \rangle$ in the
\emph{decide} handler associated with $TS$, denote the value of
$sm.value$ at this point to be $val$.
By Lemma \ref{lem:oracle3}, $sm.lastOps[i]=\langle opNum_i, val
\rangle$ for all $sm$ s.t.\ $sm.ts=TS+1$.
By Lemma \ref{lem:oracle4}, no process write to its
$sm.lastOps[i]$ until $rd$ returns, so $sm_i.lastOps[i]=\langle
opNum_i, val \rangle$ when $rd$ returns, and therefore $rd$
returns $val$. Now consider three cases:

\begin{itemize}
  \item There is no \emph{write} operation in $Req_1$ or in
  any request that was chosen before $Req_1$ in $r$. In this
  case, there is no \emph{write} operation before $rd$ in
  $\sigma_r$, and no process writes to
  $sm.value$ before $sm.lastOps[i]$ is set to $\langle opNum_i,
  sm.value\rangle$, and therefore, $rd$ returns $\perp$ as
  expected.
  
  \item There is a \emph{write} operation in $Req_1$ in $r$.
  Consider the \emph{write} operation $w$ in $Req_1$ that is
  invoked by the process with the lowest id, and assume its
  argument is $v'$. Notice that $w$ is the last \emph{write}
  that precedes $rd$ in $\sigma_r$. By the code of the
  \emph{decide} handler, $sm.value$ equals $v'$ at the
  time when $sm.lastOps[i]$ is set to $\langle opNum_i,
  sm.value\rangle$.
  Therefore, $rd$ returns $v'$, which is the value that is
  written by the last \emph{write} operation that precedes it in
  $\sigma_r$.
  
  \item There is no \emph{write} operation in $Req_1$, but there
  is a request that contains a \emph{write} operation and is
  chosen before $Req_1$ in $r$. Consider the last such request
  $Req_2$, and consider the \emph{write} operation $w$ 
  invoked by the process with the lowest id in $Req_2$. Assume
  that $w$'s argument is $v'$, and $Req_2$ was chosen in a
  consensus associated with timestamp $TS'$ (notice that
  $TS'<TS$). By the code of the \emph{decide} handler and
  Lemma \ref{lem:oracle3}, in all the $sm$'s s.t. $sm.ts=TS'+1$,
  the value of $sm.value$ is $v'$. Now, since there
  is no \emph{write} operation in any chosen request between
  $Req_2$ and $Req_1$ in $r$, no process writes to
  $sm.value$ s.t. $TS' < sm.ts < TS$. 
  Hence, when
  $sm.lastOps[i]$ is set to $\langle opNum_i, sm.value\rangle$,
  $sm.value$ equals $v'$, and therefore $rd$ returns $v'$.
  The operation $w$ is the last \emph{write} operation that
  precedes $rd$ in $\sigma_r$. 
  Therefore $rd$ returns
  the value that is written by the last \emph{write} operation 
  that precedes $rd$ in $\sigma_r$. 
     
\end{itemize}

\end{proof}

\begin{corollary}
\label{cor:atomic}

The algorithm of Section \ref{sec:diamondP} is atomic.

\end{corollary}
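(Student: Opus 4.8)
The plan is to assemble the pieces already proved rather than argue anything new: the sequential run $\sigma_r$ will serve as the linearization witness. Given an arbitrary run $r$, I would first put it in the form required by the definition of atomicity in Section \ref{sec:models}. For every operation that is pending in $r$ but is part of a request chosen in some consensus in $r$ (hence appears in $\sigma_r$ by Corollary \ref{cor:wellDef}), append a matching response action carrying the value dictated by the \emph{decide} handler: ``ok'' for a \emph{write} or a \emph{reconfig}, and the value of $sm.value$ recorded in $sm.lastOps[i]$ at the applying timestamp for a \emph{read}. Then delete every remaining invocation that still has no response. Call the resulting run $r'$. By Corollary \ref{cor:wellDef}, the set of operations occurring in $r'$ is exactly the set occurring in $\sigma_r$, each exactly once, so $\sigma_r$ is a legitimate sequential run over the same operations.

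Next I would show $\sigma_r$ is equivalent to $r'$, i.e.\ every process performs the same sequence of invoke/response actions with the same return values. Invocations coincide by construction. Return values coincide because a \emph{write}/\emph{reconfig} returns ``ok'' in both, and a \emph{read} returns whatever is stored in $sm_i.lastOps[i].res$ at the timestamp where its chosen request was applied; by Lemma \ref{lem:oracle5} that value is precisely the argument of the last \emph{write} preceding the read in $\sigma_r$ (or $\perp$ if there is none), which is also what the read returns in $\sigma_r$. I would then invoke Lemma \ref{lem:oracle2}: if $op_1$ returns before $op_2$ is invoked in $r$ then $op_1$ precedes $op_2$ in $\sigma_r$, so $\sigma_r$ conforms with $r$'s operation precedence relation, as linearizability demands.

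Finally, I would verify that $\sigma_r$ satisfies the register's sequential specification — a read returns the latest written value, or $v_0$ if none — which is immediate from Lemma \ref{lem:oracle5}; \emph{reconfig} operations place no constraint on this specification beyond returning ``ok'', so their position in $\sigma_r$ (after the reads of the same chosen request, ordered by descending process id) is harmless. Putting the three facts together — same operations, precedence-respecting order, conformance to the sequential specification — yields atomicity.

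I expect the only delicate point to be the bookkeeping around pending operations: one must check that an operation pending in $r$ yet part of a chosen request has a well-defined intended response (so completing it in $r'$ is legitimate), and, conversely, that dropping a pending operation never chosen in any consensus removes nothing $r$ itself treated as complete — here Lemma \ref{lem:oracle1} is exactly what is needed, since any returned operation is part of a chosen request. Everything else is a direct assembly of Corollary \ref{cor:wellDef}, Lemma \ref{lem:oracle2}, and Lemma \ref{lem:oracle5}.
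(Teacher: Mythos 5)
Your proposal is correct and takes essentially the same route as the paper, which states the corollary without an explicit proof as the direct assembly of Corollary \ref{cor:wellDef} (well-definedness of $\sigma_r$), Lemma \ref{lem:oracle2} (real-time order preservation), and Lemma \ref{lem:oracle5} (read semantics). Your added bookkeeping for completing chosen-but-pending operations and dropping never-chosen ones, justified via Lemma \ref{lem:oracle1}, merely makes explicit what the paper leaves implicit.
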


\subsection{Liveness}

Consider operation $op_i$ invoked at time $t$ by a correct 
process $p_i$ in run of $r$. 
Notice that $r$ is a run with either infinitely or
finitely many invocations. We show that, in both cases, if
$p_i$ is active in $r$, then $op_i$ returns in $r$.

We associate the addition or removal of process $p_j$ by a
process $p_i$ with timestamp that equals
$sm_i.ts$ at the time when the operation returns. The
addition of all processes in $P_0$ is associated with
timestamp 0.

First, we consider runs with infinitely many invocations. In
Lemma \ref{lem:OracleLive1}, we show that for every process $p$,
every $sm$ associated with a larger timestamp than 
$p$'s addition contains $p$
in $sm.cng.mem$. In Observation \ref{obs:infin}, we show that
in a run with infinitely many invocation, for every timestamp
$ts$, there is a completed operation that has a bigger
timestamp than $ts$ at the time of the invocation.
Moreover, after the stabilization time of the FD, 
operations must help all the slow active processes
in order to complete. In Lemma
\ref{lem:OrLiveInfin}, we use the observation to show that any
operation invoked in a run with infinitely many invocations
returns.

Next, we consider runs with finitely many invocations. We show
in Observation \ref{obs:OrLiveMaj} that there is a correct
majority in every up-to-date configuration, and in Lemma
\ref{lem:OrLiveActive}, we show that eventually all the
active members of the last $sm$ adopt it. Then, in
Lemma \ref{lem:OrLiveFin}, we show that every operation
invoked by active process completes.
Finally, in Theorem \ref{theo:OracleLive}, we that the algorithm
satisfies wait-freedom.

\begin{lemma}
\label{lem:OracleLive1} 

Assume the addition of $p_i$ is associated with timestamp
$TS$ in run $r$.
If $p_i$ is active, then $p_i \in sm.cng.mem$ for every
$sm$ s.t.\ $sm.ts \geq TS$.

\end{lemma}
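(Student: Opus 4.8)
The plan is to argue by induction on timestamps, showing that once $p_i$ has been added (at timestamp $TS$), it can never be dropped from $sm.cng.mem$ in any later state, because $p_i$ being active means no $remove$ of $p_i$ is ever pending, hence none is ever decided. First I would set up the induction: the base case is an $sm$ with $sm.ts = TS$. By the definition of the association of $p_i$'s addition with $TS$, there is a completed $reconfig$ containing $\langle add, p_i\rangle$ whose $decide$ handler ran at a state with timestamp moving to (or already at) $TS$; I need to check, via the $decide$ handler code (lines~\ref{lineOracle:dec1}--\ref{lineOracle:dec2} of Algorithm~\ref{algOracle:messages}), that after processing this $reconfig$ we have $p_i \in sm.cng.mem$, using that $p_i$ is active and so $\langle remove, p_i\rangle$ is not in $P(*)$, hence not in this $Req$ and not in $sm.cng.rem$. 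By Lemma~\ref{lem:oracle3}, all $sm$ with $sm.ts = TS$ agree, so $p_i \in sm.cng.mem$ for every such $sm$.

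For the inductive step, assume the claim holds for all $sm$ with $sm.ts = k$ for some $k \geq TS$, and consider an $sm'$ with $sm'.ts = k+1$. A process reaches timestamp $k+1$ either by completing the $decide$ handler associated with $k$ (starting from an $sm$ with $sm.ts = k$), or by adopting an $sm$ with timestamp $k+1$ received in a message; by Lemma~\ref{lem:oracle3} it suffices to handle the former. So I would take an $sm$ with $sm.ts = k$ (which by the inductive hypothesis has $p_i \in sm.cng.mem$) and trace through the $decide$ handler: the new $sm'.cng.mem$ keeps every $p$ currently in $sm.cng.mem$ unless $\langle remove, p\rangle \in Req$. Since $p_i$ is active, no $reconfig$ operation ever invoked in $r$ contains $\langle remove, p_i\rangle$, so by Validity of consensus no chosen $Req$ contains it; therefore $p_i$ survives into $sm'.cng.mem$.

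The main obstacle I expect is pinning down the ``association'' bookkeeping cleanly in the base case — in particular, relating the timestamp $TS$ (defined as $sm_i.ts$ at the moment the $\langle add, p_i\rangle$-bearing $reconfig$ \emph{returns} at its invoker) to the timestamp of the $decide$ handler that actually installed $\langle add, p_i\rangle$ into some $sm.cng.mem$. I would handle this by noting that the invoker does not return from the $reconfig$ until $sm_i.lastOps[\cdot].num$ reflects that operation (line~\ref{lineOracle:agreeB}), which by Lemma~\ref{lem:oracle1} means the $reconfig$ was part of a chosen $Req$ at some timestamp $TS' \leq TS$, and then Observation~\ref{obs:mon} together with Lemma~\ref{lem:oracle3} propagates the membership of $p_i$ from timestamp $TS'{+}1$ up through $TS$ — i.e.\ the base case of the induction is really ``$sm.ts \geq TS'{+}1$'' and the statement for $sm.ts = TS$ follows. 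Once that is straight, the rest is a routine reading of the $decide$ handler. The other point to be careful about is that $p_i$, once removed, is assumed never re-added (stated in the model), so there is no subtlety about $\langle add, p_i\rangle$ reappearing and interacting with $sm.cng.rem$; combined with activeness this rules out any state where $p_i \in sm.cng.rem$.
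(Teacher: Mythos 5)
Your proposal is correct and follows essentially the same route as the paper: induction on $sm.ts$, with the base case obtained from the chosen request containing $\langle add,p_i\rangle$ plus Lemma~\ref{lem:oracle3}, and the inductive step from activeness together with Validity of consensus ruling out any chosen $\langle remove,p_i\rangle$. Your extra care in the base case — observing that the decided timestamp $TS'$ may only satisfy $TS'+1 \leq TS$ rather than $TS'+1 = TS$, with the gap bridged by the same inductive step (note it is the inductive step, not Observation~\ref{obs:mon}, that does this bridging) — is in fact tighter than the paper's proof, which simply asserts $TS'=TS-1$.
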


\begin{proof}

The proof is by induction on $sm.ts$. \textbf{Base:} 
If $p_i \in P_0$, then $p_i \in sm.cng.mem$ for all $sm$ s.t.\
$sm.ts=0$.
Otherwise, $\langle add,p_i \rangle$ is part of a request that
is chosen in a consensus associated with timestamp $TS'=TS-1$,
and thus, by with Lemma \ref{lem:oracle3}, $p_i \in
sm.cng.mem$ for all $sm$ s.t.\ $sm.ts=TS'+1$.
\textbf{Induction:} Process $p_i$ is active, so no 
process invokes $\langle remove,p_i
\rangle$, and therefore, together with the validity of 
consensus, no chosen request contains $\langle remove,p_i
\rangle$.
Hence, if $p_i \in sm.cng.mem$ for $sm$ with $sm.ts=k$, then 
$p_i \in sm.cng.mem$ for every $sm$ s.t.\ $sm.ts = k+1$.

\end{proof}

\begin{observation}
\label{obs:infin}

Consider a run $r$ of the algorithm with infinitely many
invocations.
Then for every time $t$ and timestamp $TS$, there is a completed
operation that is invoked after time $t$ by a process with
$sm.ts > TS$ at the time of the invocation.

\end{observation}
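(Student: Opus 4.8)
The plan is to prove Observation~\ref{obs:infin} by contradiction, exploiting the finiteness of each consensus instance's workload together with the fact that timestamps only ever increase. Suppose for contradiction that there is a time $t$ and a timestamp $TS$ such that no completed operation is invoked after time $t$ by a process whose $sm.ts$ exceeds $TS$ at the time of invocation. First I would observe that timestamps are monotone: a process's $sm_i.ts$ never decreases (it is bumped only at the end of a \emph{decide} handler or by adopting a strictly newer $sm$), and the global supremum of timestamps across all processes is likewise non-decreasing over the course of the run. So under our assumption, every operation invoked after time $t$ that eventually completes must be invoked by a process whose local $sm.ts$ is at most $TS$ at the moment of invocation.

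Next I would argue that only finitely many distinct requests can ever be chosen in consensus instances associated with timestamps $\le TS$. Indeed, by Lemma~\ref{lem:oracle3} there is a unique $sm$ state for each timestamp value (off decide handlers), hence a unique consensus instance per timestamp in the range $0,1,\dots,TS$; by Uniform Agreement each such instance chooses at most one request; and for each chosen request, the \emph{decide} handler advances every participating process's timestamp past that value, so no process ever proposes again at a timestamp it has already left behind. Therefore at most $TS+1$ requests are ever chosen with timestamp $\le TS$, and by Lemma~\ref{lem:oracle1} every completed operation is part of some chosen request. Since each request is a finite set (bounded by $|sm.cng.mem|$ operations) and, by Lemma~\ref{lem:oracle4}, each operation sits in at most one chosen request, only finitely many operations complete via requests chosen at timestamps $\le TS$. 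Combined with the previous paragraph, this means only finitely many of the infinitely many invocations after time $t$ can ever complete.

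To close the contradiction I would now use the liveness of the algorithm itself — specifically the guarantees established (or to be established) for individual operations: since there are infinitely many invocations and runs are well-formed, infinitely many of them are invoked by active processes, and each such operation must eventually return (this is what wait-freedom, the property being proven, forces us to establish; here we use the helping machinery — an invoked operation of an active process gets placed into $ops$, proposed, and by Termination of consensus in a configuration with a correct majority, eventually chosen). Each returning operation is then part of a chosen request, whose associated timestamp — under our standing assumption that no completing operation ever sees $sm.ts > TS$ at invocation — would have to be $\le TS$ as well, because a process proposing at timestamp $\tau$ has $sm.ts = \tau$, and it only reaches the propose step after its gather phase starting from $sm.ts \le TS$ with timestamps only able to grow... actually this is the delicate point: an operation invoked with $sm.ts \le TS$ could still be carried by a request chosen at a timestamp $> TS$ if the process's $sm$ advances during the operation. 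So the contradiction is better phrased as: after the stabilization time of the FD, every chosen request must include the operations of all slow active processes (helping), so as new requests keep getting chosen at ever-larger timestamps, every active process's pending operation is eventually incorporated and completed; hence there is a completed operation invoked after $t$ whose process had $sm.ts > TS$ at invocation time — simply pick an invocation late enough that the running maximum timestamp already exceeds $TS$.

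The main obstacle, and the step I would be most careful about, is the last one: pinning down \emph{why} timestamps must grow without bound in a run with infinitely many invocations, and why a late enough invocation is necessarily made by a process whose local timestamp already exceeds $TS$. The key sub-claim is that each consensus instance eventually decides (so the global timestamp strictly increases infinitely often): this needs the correct-majority assumption in each configuration, the assumption that a consensus does not decide until a majority proposes, and the helping/periodic-update mechanism ensuring that proposals keep arriving. Once monotone unbounded growth of the global maximum timestamp is in hand, and once one shows (via helping after stabilization time, to be used in Lemma~\ref{lem:OrLiveInfin}) that active processes' operations are eventually chosen, the observation follows by choosing an invocation occurring after the global timestamp has surpassed $TS$; I would flag the interplay with the yet-to-be-proven Lemma~\ref{lem:OrLiveInfin} and keep this observation's statement deliberately about \emph{existence} of one such completed operation so that the argument does not become circular.
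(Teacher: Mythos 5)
Your proposal does not close. The counting in your second paragraph bounds the number of operations that complete \emph{via requests chosen at timestamps} $\le TS$, but the quantity you actually need to bound is the number of \emph{invocations made while the invoker's} $sm.ts \le TS$ --- and, as you yourself notice, these are not the same: an operation invoked at a low timestamp can be carried by a request chosen at an arbitrarily higher timestamp. Having identified that hole, you try to patch it with the helping mechanism, FD stabilization, and unbounded growth of the global timestamp. That is precisely the machinery of Lemma~\ref{lem:OrLiveInfin}, the lemma this observation is supposed to feed; you flag the circularity but never actually escape it, and you leave the key sub-claim (every consensus instance eventually decides, so timestamps grow without bound) as an acknowledged obstacle rather than a proved step.

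The paper's proof needs none of this. It is a direct pigeonhole on \emph{processes}, not on chosen requests: only finitely many processes can ever hold $sm.ts \le TS$ (a process whose addition is associated with a timestamp greater than $TS$ never sees $sm.ts \le TS$, and the processes added up to timestamp $TS$ form a finite set since each configuration is finite); moreover a single process cannot invoke two operations with the same $sm.ts$ at invocation time (by well-formedness the first must complete before the second is invoked, and completion forces $sm_i$ to advance to a strictly larger timestamp). Hence only finitely many operations in the whole run are invoked with $sm.ts \le TS$. Since $r$ is well-formed and has infinitely many invocations, infinitely many operations complete, so infinitely many completed operations are invoked with $sm.ts > TS$, and all but finitely many of these invocations occur after any fixed time $t$. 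If you recast your argument as counting low-timestamp \emph{invocations} in this way, the contradiction framing, the appeal to liveness, and the unbounded-growth sub-claim all become unnecessary.
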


\begin{proof}

Recall that $r$ is well-formed. Therefore, there are infinitely
many completed operations in $r$. Now notice that a process
cannot invoke two operations with the same $sm.ts$. 
For every timestamp $TS$, at the time
of the decision in the consensus associated with $TS$, there are
finitely many correct processes. 
All processes whose addition is
associated with timestamps bigger than $TS$, never have 
$sm.ts \leq TS$. Hence, a finite number of
operations are invoked by processes with 
$sm.ts \leq TS$ at the time of the invocation.
And therefore, after every time $t$, there are completed
operations that are invoked by a processes with $sm.ts > TS$
at the time of the invocation.

\end{proof}

\begin{lemma}
\label{lem:OrLiveInfin}

Consider an operation $op_i$ invoked at time $t$ by an
active process $p_i$ in a
run $r$ with infinitely many invocations. Then $op_i$ completes
in $r$.

\end{lemma}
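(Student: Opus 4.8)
The plan is to show that the operation $op_i$ of the active process $p_i$ eventually gets helped into a chosen request. First I would fix the stabilization time ST of the failure detector FD (from Eventual Strong Accuracy). By Observation~\ref{obs:infin}, applied with $t = \mathrm{ST}$ and $TS$ chosen large enough to exceed both the timestamp associated with $p_i$'s addition and $sm_i.ts$ at time ST, there is a completed operation $op_j$, invoked by some process $p_j$ at a time $t' > \mathrm{ST}$, whose $sm_j.ts$ at invocation time exceeds $TS$. I would then track the \emph{gather} phase of $op_j$: when $p_j$ invokes $op_j$ at time $t'$, it sends ``helpRequest'' to every process in $A = sm_j.cng.mem$, and by Lemma~\ref{lem:OracleLive1} (since $sm_j.ts > TS$ exceeds $p_i$'s addition timestamp and $p_i$ is active), $p_i \in A$. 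Since $op_j$ completes, $p_j$ must exit the \emph{gather} wait loop (line~\ref{lineOracle:gatheE}); for the entry corresponding to $p_i$, this means either $p_j$ received a ``helpReply'' from $p_i$, or $p_i$ was suspected by $p_j$'s FD, or $p_i \in sm_j.cng.rem$. The latter two are ruled out: $p_i$ is active, so it is never removed and hence (by Lemma~\ref{lem:OracleLive1}) never in any $sm.cng.rem$; and after ST, FD never answers \emph{fail} about $p_i$ (it was added before $t' > \mathrm{ST}$). So $p_j$ receives a ``helpReply'' from $p_i$ carrying $myOp_i$ and $opNum_i$, and $\langle p_i, op_i, opNum_i\rangle$ is added to $ops_j$ — provided $p_i$ has already invoked $op_i$ before it answers, which holds because $op_i$ was invoked at time $t < \mathrm{ST} < t'$ and $p_i$ answers the helpRequest only after setting $myOp_i \leftarrow op_i$; one must also check $op_i$ has not already been completed (if it has, we are done).

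Next I would argue that this causes $op_i$ to be chosen. Process $p_j$, in phase~2, builds $Req$ from the not-yet-performed entries of $ops_j$ (line~\ref{lineOracle:op7}) and proposes it; by the assumption that consensus does not decide until a majority proposes, and that $op_j$ completes, some request is chosen in the consensus at $sm_j.ts$. The subtlety is that the chosen request need not be $p_j$'s proposal, and even if $op_i$ is not in the chosen request at that particular timestamp, $p_j$ loops: as long as $sm_j.lastOps[j].num \neq opNum_j$, it re-proposes, and crucially $ops_j$ still contains $\langle p_i, op_i, opNum_i\rangle$, so $op_i$ is re-included in every subsequent $Req$ (until it is performed in $sm_j$). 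Since $op_j$ terminates, $p_j$ passes its own \emph{while} guard, which means $op_j$ was put into a chosen request; but $op_j$ and $op_i$ travel together in $ops_j$ and in every $Req$ $p_j$ proposes while $op_i$ is unperformed, so by the time $op_j$'s proposal is accepted, either $op_i$ was already performed in $sm_j$ (hence in some chosen request, by Lemma~\ref{lem:oracle3} and the decide handler) or it is in that same accepted request. Either way $\langle p_i, op_i, opNum_i\rangle$ lies in a chosen request, so $sm.lastOps[i].num = opNum_i$ for every $sm$ with timestamp past that consensus.

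Finally, I would close the loop on $p_i$ itself: once some $sm$ with $sm.lastOps[i].num = opNum_i$ exists, this value is monotone (Observation~\ref{obs:mon}) and propagates. Since $p_i$ is active and correct, and correct processes periodically broadcast their $sm$ (line~\ref{lineOracle:periodically}) and adopt newer $sm$'s on receipt, $p_i$ eventually receives an $sm$ with $sm.lastOps[i].num = opNum_i$ — I would spell out that some correct process $p_k$ in the configuration where $op_i$ was chosen holds such an $sm$ and keeps re-sending it until $p_i$ adopts it (using fairness and the fact that $p_i \in sm.cng.mem$ throughout by Lemma~\ref{lem:OracleLive1}). Then $p_i$'s wait conditions (line~\ref{lineOracle:agreeB}/\ref{line:wait2}) are satisfied and it exits the main loop; if $op_i$ is a reconfig it must additionally gather a majority of ACKs or see a newer $sm$, which succeeds for analogous reasons (a majority of the new configuration learned $sm$ when solving consensus, or later reconfigurations produce a newer $sm$). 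Hence $op_i$ returns.

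I expect the main obstacle to be the ``they travel together'' argument: carefully justifying that $op_i$ remains in $ops_j$ and is re-proposed in every round until it is actually performed in $sm_j$, and that $p_j$'s own termination therefore forces $op_i$ into a chosen request — handling the interleavings where $p_j$ adopts a newer $sm$ mid-way (abandoning a consensus) without losing $ops_j$, and the case where $op_i$ gets performed via some other helper before $p_j$'s round completes.
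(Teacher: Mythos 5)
Your first two steps track the paper's proof closely: apply Observation~\ref{obs:infin} after the stabilization time to obtain a completed operation $op_j$ whose invoker $p_j$ has $sm_j.ts$ past $p_i$'s addition timestamp, use Lemma~\ref{lem:OracleLive1} and eventual strong accuracy to force $p_j$ to collect $\langle p_i,op_i,opNum_i\rangle$ in its gather phase, and argue that $op_j$ and $op_i$ ``travel together'' so that $op_i$ ends up in a chosen request (you are in fact more careful here than the paper, which asserts this in one line). The problem is your final step.

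There is a genuine gap in how $p_i$ \emph{learns} that $op_i$ was performed. You rely on the periodic ``update'' broadcast: ``some correct process $p_k$ \ldots holds such an $sm$ and keeps re-sending it until $p_i$ adopts it (using fairness).'' But fairness only guarantees delivery of messages sent by \emph{correct} (never-failing) processes, and in a run with infinitely many reconfigurations there need not exist any correct process other than $p_i$: every helper can be removed and then crashed right after its operation completes (the reconfigurable/minority conditions permit exactly this -- it is the scenario driving the impossibility proof of Section~\ref{sec:storageImpossibility}). So no never-failing process is guaranteed to hold an $sm$ reflecting $op_i$'s result, and the periodic one-way broadcast gives $p_i$ nothing. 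The paper closes this hole by invoking Observation~\ref{obs:infin} a \emph{second} time: take a later completed operation $op_k$ invoked after stabilization with $sm_k.ts > TS'$ (where $TS'$ bounds the consensus in which $op_i$ was chosen). Since the FD answers \emph{ok} about $p_i$ after stabilization and $p_i$ is never in any $cng.rem$, $op_k$ cannot exit its gather wait (line~\ref{lineOracle:gatheE}) without a ``helpReply'' from $p_i$; and $p_i$ only sends that reply upon receiving the ``helpRequest'' carrying $sm_k$, which it then adopts and which already has $sm_k.lastOps[i].num=opNum_i$ by Observation~\ref{obs:mon}. The completion of $op_k$ thus \emph{certifies} two-way communication with $p_i$ -- this forced blocking, not the periodic broadcast, is what makes the $\Diamond P^D$-based helping circumvent the impossibility, and your proof needs it to go through.
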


\begin{proof}

Assume by way of contradiction that $p_i$ is
active and $op_i$ does not complete in $r$.
Assume that $p_i$'s addition is associated with timestamps
$TS$.
Consider a time $t'>t$ after $p_i$ invoked $op_i$ and the FD has
stabilized. By Observation \ref{obs:infin}, there is a completed
operation $op_j$ in $r$, invoked by some process $p_j$
at a time $t'' > t'$ when $sm_j.ts > TS$, which completion is
associated with timestamp $TS'$. By Lemma \ref{lem:OracleLive1},
$p_i \in sm_j.cng.mem$, at time $t''$. 
Now by the algorithm and by the eventual strong
accuracy property of the FD, $p_j$ proposes $op_j$ and $op_i$ in
the same request, and continue to propose both of them until one
is selected. Note that it is impossible for $op_i$ to be
selected without $op_i$ since any process that helps $p_i$ after
stabilization also helps $p_i$.
Hence, since $op_i$ completes, they are both performed in the
same \emph{decide} handler.
The run is well-formed, so $p_i$ does not invoke operations that
are associated with a higher $num$. 
Hence, following the time when $op_i$ is selected, for all
$sm$ s.t.\ $sm.ts > TS'$, $sm.lastOps[i].num = opNum_i$. 
Now,
again by Observation \ref{obs:infin}, consider a completed
operation $op_k$ in $r$, that is invoked by some process $p_k$
at time $t'''$ after the stabilization time of the FD s.t.\
$sm_k.ts > TS'$ at time $t'''$. Operation $op_k$ cannot
complete until $p_i$ receives $p_k$'s $sm$. 
Therefore, $p_i$
receives $sm$ s.t.\ $sm.ts \geq TS'$, and thus
$sm.lastOps[i].num = opNum_i$.
Therefore, $p_i$ learns that $op_i$ was performed, and $op_i$
completes. A contradiction.

\end{proof}

We now proceed to prove liveness in runs with finitely many
invocations.

\begin{definition}

For every run $r$ of the algorithm, and for any point $t$ in
$r$, let $TS_t$ be the timestamp 
associated with the last consensus that made a
decision in $r$ before time $t$. Define $sm^t$, at any point
$t$ in $r$, to be the $sm$'s state after the completion of the
decide handler associated with timestamp $TS_t$ at any process.
Recall that $sm^0$ is the initial state.

\end{definition}

\begin{observation}

For every run $r$ of the algorithm, and for any point $t$ in
$r$, $sm^t$ is unique. 

\end{observation}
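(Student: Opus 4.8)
The plan is to derive this directly from Lemma~\ref{lem:oracle3}, which already pins down the $sm$ state as a function of its timestamp. Recall that $sm^t$ is read off a process $p$ that completes the decide handler associated with timestamp $TS_t$: it is the value of $sm_p$ at the instant that handler returns. The first step is to observe that at that instant $sm_p$ is \emph{not} being modified by a decide handler (the handler has just returned) and that $sm_p.ts = TS_t + 1$. The latter holds because the decide handler associated with $TS_t$ is, by construction of the algorithm, executed by a process whose $sm.ts$ equals $TS_t$ when it starts (consensus instance $TS_t$ runs in $sm.cng.mem$ with $sm.ts = TS_t$), and the handler's last two lines increment $sm.ts$; so after it returns $sm_p.ts = TS_t + 1$.

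Next I would handle the degenerate case: if no consensus has decided before time $t$, then $sm^t = sm^0$, which is the fixed initial state and hence trivially unique. Otherwise, since some consensus decided before $t$, the corresponding $decide$ event fired at some process, so the decide handler associated with $TS_t$ was indeed executed by at least one process, and $sm^t$ is non-vacuously defined. For uniqueness, take any two processes $p, q$ that complete the decide handler associated with $TS_t$; by the previous paragraph, immediately afterward each has $sm.ts = TS_t + 1$ and neither is inside a decide handler, so Lemma~\ref{lem:oracle3} gives $sm_p = sm_q$ at those respective moments, i.e.\ $sm^t$ does not depend on the choice of witness process.

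The one point that needs care is that Lemma~\ref{lem:oracle3} is stated for two processes \emph{at a common time}, whereas the two decide handlers above complete at possibly different times. This is not a genuine gap: the inductive argument behind Lemma~\ref{lem:oracle3} shows more, namely that for each timestamp $TS$ there is a single canonical state --- a process can enter timestamp $TS+1$ only by finishing the unique decide handler for $TS$ (whose decided request is fixed by Uniform Agreement, and whose effect is deterministic) or by copying a received $sm$ already carrying timestamp $TS+1$. So I would either invoke this strengthened reading of Lemma~\ref{lem:oracle3} or re-run its induction one step to conclude that all $sm$ states with timestamp $TS_t + 1$ coincide, regardless of when they are observed. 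I expect this bookkeeping --- making explicit that ``the state of timestamp $TS_t+1$'' is well-defined globally, not just pointwise --- to be the only real content of the argument; everything else is immediate from the cited results.
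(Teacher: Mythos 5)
Your proof is correct and follows the same route as the paper: the paper's own one-line argument is ``By Lemma~\ref{lem:oracle3}, all the decide handlers associated with the same timestamp lead to the same $sm$,'' which is precisely your appeal to the per-timestamp canonical state established inside that lemma's induction (uniform agreement plus determinism of the handler). Your extra care about Lemma~\ref{lem:oracle3} being stated at a common time, and the degenerate $sm^0$ case, only makes explicit what the paper leaves implicit.
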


\begin{proof}

By Lemma \ref{lem:oracle3}, all the decide handlers associated
with the same timestamp lead to the same $sm$.

\end{proof}

\begin{observation}
\label{obs:OrLiveCond}

For every run $r$ of the algorithm, and for any point $t$ in
$r$, $V(t).members \subseteq 
sm^t.cng.mem$ and $sm^t.cng.mem \cap V(t).remove=\{\}$.

\end{observation}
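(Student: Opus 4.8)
The plan is to prove the two containments by induction on $TS_t$, the timestamp associated with the last consensus decision before time $t$. The key point is that $sm^t$ changes only at decide handlers, so it suffices to track how $V$ and $sm^\bullet$ evolve together as configuration changes are decided. For the base case, at time $0$ we have $V(0) = \{\langle add,p\rangle \mid p \in P_0\}$ by convention, so $V(0).members = P_0$ and $V(0).remove = \{\}$, while $sm^0.cng = \langle P_0, \{\}\rangle$; both containments hold trivially. For the inductive step, I would consider the moment a $\mathit{reconfig}(c)$ operation completes (which is the only way $V$ grows) and the moment a decide handler that includes reconfig operations fires (the only way $sm^\bullet.cng$ changes).

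The heart of the argument is the relationship between completed reconfig operations and the state $sm$. By Lemma~\ref{lem:oracle1}, if $\mathit{reconfig}(c)$ invoked by $p_i$ returns at time $t$, then the changes in $c$ were part of some chosen request $Req$; by the code of the decide handler (lines~\ref{lineOracle:dec1}--\ref{lineOracle:dec2} in Algorithm~\ref{algOracle:messages}), executing that request puts every $p \in c.join \setminus sm.cng.rem$ into $sm.cng.mem$ and every $p \in c.remove$ into $sm.cng.rem$. I would argue that by the time $\mathit{reconfig}(c)$ actually \emph{returns}, the invoking process has $sm_i.ts$ strictly past the timestamp $TS'$ of that decide, so $sm^t$ for any $t$ after the return reflects all of $c$. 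Combined with Observation~\ref{obs:mon}-style monotonicity — once a process is in $cng.rem$ it stays there, and the membership update never re-adds a process that is in $cng.rem$ — this gives $V(t).members \subseteq sm^t.cng.mem$: every added-and-not-removed process of $V(t)$ has been placed in $sm^t.cng.mem$ and not subsequently evicted, since eviction requires a $\langle remove,p\rangle$ change, which would put $p$ into $V(t).remove$, contradicting $p \in V(t).members$. Symmetrically, $p \in V(t).remove$ means some completed reconfig removed $p$, hence $p \in sm^t.cng.rem$, and since the line~\ref{lineOracle:dec1} update explicitly excludes members of $cng.rem$, we get $p \notin sm^t.cng.mem$, i.e.\ $sm^t.cng.mem \cap V(t).remove = \{\}$.

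The main obstacle I anticipate is the bookkeeping around the \emph{timing}: a change can be decided into some $sm$ with timestamp $TS'$ strictly before the corresponding $\mathit{reconfig}$ operation returns, and conversely $sm^t$ is defined via the \emph{last} decision before $t$, which may lag behind or race ahead of individual processes' local $sm$. I would handle this by phrasing the invariant purely in terms of chosen requests and the canonical states $sm^{TS}$ they induce (well-defined by Lemma~\ref{lem:oracle3} and the preceding observation), then observing that $sm^t = sm^{TS_t}$ and that $V(t)$ only ever reflects changes from reconfigs that have \emph{returned}, which by Lemma~\ref{lem:oracle1} were chosen no later than the decision at timestamp $TS_t$ — because a reconfig cannot return before its majority-propagation wait at line~\ref{lineOracle:op7} completes, which forces $sm_i.ts$ past the decide's timestamp, hence that decide precedes time $t$. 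Once the ordering "reconfig returned before $t$" $\Rightarrow$ "its chosen request has timestamp $\le TS_t$" is pinned down, both containments follow from the deterministic membership-update rule applied cumulatively.
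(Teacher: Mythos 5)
The paper itself gives no argument here (its proof is literally ``Easy to show by induction''), so there is nothing detailed to compare against; your induction on $TS_t$, driven by Lemma~\ref{lem:oracle1} and the decide-handler code, is clearly the intended skeleton, and your treatment of the second containment ($p \in V(t).remove$ implies the removing reconfig was decided at some $TS' \le TS_t$, hence $p$ enters $cng.rem$ and can never re-enter $cng.mem$) is sound. However, there is a genuine gap in your argument for the first containment, at the step ``eviction requires a $\langle remove,p\rangle$ change, which would put $p$ into $V(t).remove$.'' Eviction of $p$ from $sm^t.cng.mem$ only requires that a request containing $\langle remove,p\rangle$ be \emph{chosen in consensus} at some timestamp $\le TS_t$; it does not require that the corresponding \emph{reconfig} operation \emph{return} before $t$. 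By definition, $V(t)$ accumulates only changes from reconfigs whose response event has occurred before $t$. So in the window after the decision at $TS'$ but before the invoker finishes its majority-ACK wait at line~\ref{lineOracle:op7} (or forever, if the invoker crashes right after the decision), we have $p \notin sm^t.cng.mem$ while $p \notin V(t).remove$, hence $p \in V(t).members$, and the containment $V(t).members \subseteq sm^t.cng.mem$ fails. Your own observation that a reconfig ``cannot return before its majority-propagation wait completes'' gives you the implication you need in one direction (completed $\Rightarrow$ decided), but you then silently use its false converse (decided $\Rightarrow$ completed).

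The repair is to weaken the first containment to $V(t).members \setminus P(t).remove \subseteq sm^t.cng.mem$: a decided-but-uncompleted reconfig is by definition pending, so the offending $\langle remove,p\rangle$ lies in $P(t)$ and $p \in P(t).remove$, restoring your ``eviction forces $p$ out of the left-hand side'' step. This weaker form is also exactly what is consumed downstream, since both the minority-failure condition and Observation~\ref{obs:OrLiveMaj} are phrased in terms of $V(t).members \setminus P(t).remove$ and $V(t).members \cup P(t).join$. As written, though, your proof asserts an implication that does not hold, and the statement you are proving appears to need this same correction.
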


\begin{proof}

Easy to show by induction.

\end{proof}

\begin{observation}
\label{obs:OrLiveMaj}

For every run $r$ of the algorithm, and for any point $t$ in
$r$.
There is a majority
of $sm^t.cng.mem$ $M$ s.t. $M \subseteq (V(t).members \cup
P(t).join) \setminus F(t)$.

\end{observation}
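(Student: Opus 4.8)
The plan is to exhibit $M := sm^t.cng.mem \setminus F(t)$ as the required majority, and to reduce the whole statement to a single auxiliary inclusion: $sm^t.cng.mem \subseteq V(t).members \cup P(t).join$. Together with Observation~\ref{obs:OrLiveCond} (which already gives $V(t).members \subseteq sm^t.cng.mem$ and $sm^t.cng.mem \cap V(t).remove = \{\}$), this pins $sm^t.cng.mem$ between $V(t).members$ and $V(t).members \cup P(t).join$, and the rest is set arithmetic against the minority failure condition.

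For the auxiliary inclusion I would fix $p \in sm^t.cng.mem$ and trace back how $p$ got there. By the definition of $sm^t$ (and $TS_t$), $p$'s presence in $sm^t.cng.mem$ is the net effect of decide handlers of timestamps at most $TS_t$, i.e., of consensus decisions that occurred strictly before time $t$. Either $p \in P_0$, and then $p \in V(t).join$ by the convention that $\mathit{reconfig}(\{\langle add,p\rangle\mid p\in P_0\})$ completes at time $0$; or some chosen request contains $\langle add,p\rangle$. In the latter case, Validity of consensus says this request was proposed by some configuration member, and by the code every request a process proposes consists solely of operations that have already been invoked in $r$ (its own, or others' relayed via ``helpReply''/``propose''); hence some $\mathit{reconfig}$ with $\langle add,p\rangle$ in its argument was invoked before time $t$. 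An operation invoked before $t$ is either complete at $t$, giving $\langle add,p\rangle \in V(t)$ hence $p \in V(t).join$, or pending at $t$, giving $p \in P(t).join$. Combining with $p \notin V(t).remove$ from Observation~\ref{obs:OrLiveCond}, we get $p \in (V(t).join\setminus V(t).remove)\cup P(t).join = V(t).members \cup P(t).join$. (Alternatively this can be folded into the same induction that proves Observation~\ref{obs:OrLiveCond}.)

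Given the inclusion, part one is immediate: $M = sm^t.cng.mem \setminus F(t) \subseteq (V(t).members\cup P(t).join)\setminus F(t)$. For the majority claim, $sm^t.cng.mem \cap F(t) \subseteq (V(t).members\cup P(t).join)\cap F(t)$, so the minority failure condition gives $|sm^t.cng.mem \cap F(t)| < |V(t).members\setminus P(t).remove|/2 \le |V(t).members|/2$; since $V(t).members \subseteq sm^t.cng.mem$ by Observation~\ref{obs:OrLiveCond}, this is $< |sm^t.cng.mem|/2$, whence $|M| = |sm^t.cng.mem| - |sm^t.cng.mem\cap F(t)| > |sm^t.cng.mem|/2$.

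I expect the only real content to be the auxiliary inclusion — in particular, the step that every process in $sm^t.cng.mem$ traces to a $\mathit{reconfig}$ that has at least been \emph{invoked} by time $t$ (so that it is counted in $V(t).join\cup P(t).join$), while the matching removal cannot have \emph{completed} by $t$ (handled by Observation~\ref{obs:OrLiveCond}). Once those inclusions are in place, deriving both conclusions from the minority failure condition is routine counting.
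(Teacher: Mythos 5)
Your proof is correct and follows the same route as the paper, whose entire proof is the one-liner ``follows from Observation~\ref{obs:OrLiveCond} and the failure condition.'' The one genuine addition you make is the explicit inclusion $sm^t.cng.mem \subseteq V(t).members \cup P(t).join$ (traced via consensus validity back to reconfig operations that were at least invoked, hence counted in $V(t)$ or $P(t)$), which the paper leaves implicit but which is indeed needed for the counting step, since Observation~\ref{obs:OrLiveCond} only bounds $sm^t.cng.mem$ from below.
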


\begin{proof}

The observation follows from Observation \ref{obs:OrLiveCond}
and the failure condition.

\end{proof}

\begin{observation}

Consider a run $r$ of the algorithm with finitely many
invocations.
Then there is a point $t$ in $r$ s.t.\ for every $t'>t$,
$sm^t=sm^{t'}$. Denote this $sm$ to be $\hat{sm}$. 

\end{observation}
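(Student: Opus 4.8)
The plan is to show that in a run with finitely many invocations, the sequence of ``stable'' states $sm^t$ stops changing. First I would observe that every operation that reaches a consensus decision is part of some request proposed by a process partaking in a consensus, and — by the algorithm's structure (lines \ref{lineOracle:propose} and \ref{lineOracle:proposeB}) together with the convention that a consensus does not decide until a majority of its configuration has proposed — a new consensus (associated with a fresh timestamp) can decide only if some process actually invoked an operation and reached its second phase, or forwarded a ``propose'' message triggered by such an invocation. Since there are only finitely many invocations in $r$, each invocation triggers only finitely many ``propose'' message rounds (the configuration at each timestamp is finite by the failure model and Observation~\ref{obs:OrLiveMaj}), and no new operations are ever introduced into requests after the last invocation completes its gather phase, only finitely many distinct requests are ever chosen. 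Hence only finitely many consensus instances decide in $r$, so there is a largest timestamp $TS^{\max}$ among all consensus instances that decide in $r$.

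Next I would let $t$ be any time after the last decide handler associated with $TS^{\max}$ completes at any process. For $t' > t$, the timestamp $TS_{t'}$ of the last consensus to decide before $t'$ equals $TS^{\max}$, because no consensus with a higher timestamp ever decides in $r$ and the set of timestamps of decided consensus instances is closed downward (a consensus at timestamp $k+1$ is only run by processes that have already advanced their $sm.ts$ to $k+1$, which — by Lemma~\ref{lem:oracle3} and the decide handler — happens only after the decide at timestamp $k$). Therefore $TS_{t'} = TS_t = TS^{\max}$ for all $t' > t$, and by the preceding observation $sm^{t'}$ is the unique state reached at the end of the decide handler associated with $TS^{\max}$, which is exactly $sm^t$. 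Defining $\hat{sm} \defeq sm^t$ gives the claim.

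The main obstacle is the first step: carefully arguing that finitely many invocations force finitely many decided consensus instances. One has to rule out an infinite cascade of ``propose'' forwardings (lines \ref{lineOracle:proposeB}--\ref{lineOracle:proposeE}) generating ever-higher timestamps with no new invocations. This is handled by noting that a consensus at timestamp $k$ cannot decide until a majority of the finite configuration $sm.cng.mem$ at timestamp $k$ has proposed in it, and — crucially — after the last invocation in $r$ completes its gather phase, the only nonempty requests $Req$ in circulation are those already pinned to existing timestamps, so no process ever calls $propose$ with a fresh, not-yet-decided timestamp unless it is re-proposing for an operation that was invoked (of which there are finitely many). I would also lean on the earlier atomicity lemmas (Lemmas~\ref{lem:oracle1}, \ref{lem:oracle4}) to bound the number of distinct chosen requests by the finite number of invoked operations plus the finitely many empty-request decisions, the latter being bounded because each is triggered by some ``propose'' message which is ultimately traceable to an invocation.
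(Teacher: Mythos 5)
Your proposal is correct; note, though, that the paper states this observation \emph{without any proof}, treating it as self-evident, so your argument is supplying a justification the authors omitted rather than mirroring one of theirs. The route you take is the natural one: finitely many invocations give finitely many operations, every chosen request is nonempty, and by Lemma~\ref{lem:oracle4} each operation belongs to at most one chosen request, so only finitely many consensus instances decide and $sm^t$ is eventually constant. One small simplification: you need not worry about ``empty-request decisions'' at all --- a process only reaches line~\ref{lineOracle:op7} while $sm_i.lastOps[i].num \neq opNum_i$, so its own pending operation is always in the $Req$ it proposes, and forwarded proposals carry the same nonempty $Req$; hence every decided request contains at least one operation and the count of decisions is bounded directly by the number of invocations.
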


\begin{lemma}
\label{lem:OrLiveActive}

Consider a run $r$ of the algorithm with finitely many
invocations.
Then eventually for
every active process $p_i \in \hat{sm}.cng.mem$,
$sm_i=\hat{sm}$.

\end{lemma}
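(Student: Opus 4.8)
The plan is to argue that once the configuration stabilizes to $\hat{sm}$, the periodic "update" messages and the completeness/accuracy of the FD guarantee that every active member of $\hat{sm}.cng.mem$ eventually receives an $sm$ with timestamp $\ge \hat{sm}.ts$, and hence (by Lemma~\ref{lem:oracle3} and monotonicity, Observation~\ref{obs:mon}) adopts exactly $\hat{sm}$. First I would fix a time $t_0$ after which $sm^t = \hat{sm}$ for all $t \ge t_0$ (such $t_0$ exists by the preceding observation), and after which the FD has stabilized (Eventual Strong Accuracy). Since $r$ has finitely many invocations, there is a time $t_1 \ge t_0$ after which no operation is ever invoked; I would work past $\max(t_0,t_1)$.

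The core step is to show some active process $p_j \in \hat{sm}.cng.mem$ actually holds $\hat{sm}$. By definition $\hat{sm}$ is the state reached at the end of the decide handler associated with timestamp $TS := \hat{sm}.ts$ at some process; by Termination of consensus and the assumption that a consensus does not decide until a majority of the configuration proposes, a majority $M$ of $\hat{sm}.cng.mem$ participated in and decided that consensus, so each $p_j \in M$ sets $sm_j = \hat{sm}$ at the end of its decide handler. By Observation~\ref{obs:OrLiveMaj}, $M$ contains a correct — in fact, since no further removals occur, active — process $p_j$; moreover $p_j$ never moves past timestamp $TS$, because doing so would require another consensus to decide, contradicting $sm^{t'}=\hat{sm}$ for all $t' \ge t_0$. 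Hence from some point on, $p_j$ permanently holds $sm_j = \hat{sm}$.

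Now I would propagate $\hat{sm}$ to the remaining active members. The process $p_j$ executes the periodic rule (Algorithm~\ref{algOracle:operation}, line~\ref{lineOracle:periodically}) forever, sending $\langle\text{``update''},\hat{sm},\perp\rangle$ to all of $sm_j.cng.mem = \hat{sm}.cng.mem$. Take any active $p_i \in \hat{sm}.cng.mem$. Since $p_i$ and $p_j$ are both correct, fairness guarantees $p_i$ eventually receives such a message; by the update handler (Algorithm~\ref{algOracle:messages}, lines~\ref{lineOracle:updateB}–\ref{lineOracle:updateE}), if $sm_i.ts < \hat{sm}.ts$ then $p_i$ adopts $\hat{sm}$. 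It remains to rule out $sm_i.ts > \hat{sm}.ts$, which is immediate: by definition of $\hat{sm}$ no consensus with a higher timestamp ever decides, and $p_i$ can only raise $sm_i.ts$ beyond $TS$ via its own decide handler or by receiving an $sm$ with larger timestamp — both of which require such a decision (one can formalize this by a short induction, as in Lemma~\ref{lem:oracle3}, showing every $sm$ ever held by any process has timestamp $\le TS$ after $t_0$). Once $sm_i.ts = \hat{sm}.ts$, Lemma~\ref{lem:oracle3} forces $sm_i = \hat{sm}$, and $p_i$ cannot later change it (any newer $sm$ would again need a higher-timestamp decision).

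The main obstacle I anticipate is the bookkeeping around "$sm_i$ cannot overshoot $\hat{sm}$": one must carefully argue that receiving a stale "propose"/"helpRequest"/"update" message cannot push a process to a timestamp above $TS$, which ultimately reduces to the fact that every $sm$ in circulation in the system after $t_0$ has timestamp $\le TS$ — and that in turn rests on there being no invocations and no decisions past $t_0$. This is routine but needs to be stated as an auxiliary invariant rather than waved at, since the helping and skip-forward mechanics give several ways for an $sm$ to move between processes.
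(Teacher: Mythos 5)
There is a genuine gap in your core step. You claim that a majority $M$ of $\hat{sm}.cng.mem$ ``participated in and decided'' the consensus that produced $\hat{sm}$, and then you intersect $M$ with the correct majority from Observation~\ref{obs:OrLiveMaj}. But that consensus is the one associated with timestamp $\hat{sm}.ts-1$, and it runs in the \emph{previous} configuration $sm_{prev}.cng.mem$ (where $sm_{prev}.ts=\hat{sm}.ts-1$); the majority that must propose before it can decide is a majority of $sm_{prev}.cng.mem$, not of $\hat{sm}.cng.mem$. If the decided request contains a \emph{reconfig}, these two memberships can differ substantially (the proposers may all be removed, and completed removals may then fail under the reconfigurable model), so nothing in your argument places a single active member of $\hat{sm}.cng.mem$ in possession of $\hat{sm}$. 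Without that seed, your propagation step --- which is otherwise fine, as is your ``no overshoot'' invariant --- has nothing to propagate.

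The paper closes exactly this hole with a case split on whether a \emph{reconfig} chosen in that last consensus completes in $r$. If none does, the old configuration's correct majority suffices: some process that decides in that consensus adopts $\hat{sm}$ and keeps sending periodic updates. If one does complete, the argument must instead invoke the state-survival mechanism (lines \ref{lineOracle:goto}--\ref{lineOracle:op7} of Algorithm~\ref{algOracle:operation}): the completing \emph{reconfig} does not return until a majority of the \emph{new} configuration $\hat{sm}.cng.mem$ has ACKed $\hat{sm}$, and the minority-failure condition then yields an active process among that majority. Your proof never uses this mechanism, yet it is the only feature of the algorithm that transfers the final state across a membership change; the missing case is therefore not mere bookkeeping but the crux of the lemma.
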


\begin{proof}

Recall that initially $sm_0.cng.mem=P_0$, and for every process
$p_i \in P_0$, $sm_i=sm_0$. Therefore the lemma holds if
$\hat{sm}=sm_0$. Now assume that $\hat{sm} \neq sm_0$, and
assume that $\hat{sm}.ts=TS$. 
Consider the $sm$ s.t.\ $sm.ts=TS-1$, denote it $sm_{prev}$. By
Lemma \ref{lem:oracle3}, $sm_{prev}$ is well defined. 
Assume
that the decision in the consensus associated with $TS-1$,
denote it $con$, was made at some time $t'$. 
Now consider two
possible cases. 

In first case, there is no \emph{reconfig} operation
that was chosen in $con$ and completes in $r$. 
By Observation
\ref{obs:OrLiveCond}, $V(t').members \subseteq
sm_{prev}.cng.mem$ and $sm_{prev}.cng.mem \cap
V(t').remove=\{\}$ at time $t'$, so by the Observation
\ref{obs:OrLiveMaj} there is a correct majority of
$sm_{prev}.cng.mem$.
By our assumption on consensus, a majority of
$sm_{prev}.cng.mem$ has to propose in order to made a decision.
Therefore, since the majority interact, there is some
active process in $sm_{prev}.cng.mem$ that decides in
$con$, and moves its state to $\hat{sm}$. 
Now recall that the
processes periodically send update messages with their $sm$ to
all the process in their $sm.cng.mem$. 
Therefore, for every
active process $p_i$ in $sm_{prev}.cng.mem$, eventually
$sm_i=\hat{sm}$. 

In the second case, some \emph{reconfig} operation
that was chosen in $con$ completes. Notice that its completion
must be associated with timestamp $TS$. 
Therefore, by
the algorithm (lines
\ref{lineOracle:goto}-\ref{lineOracle:op7} in Algorithm
\ref{algOracle:operation}), a majority of $\hat{sm}.cng.mem$
receives $\hat{sm}$. 
By the failure condition, at least one of these processes
is active. 
Hence, thanks to the periodic update messages,
for every active process $p_i$ in $sm_{prev}.cng.mem$,
eventually $sm_i=\hat{sm}$.

\end{proof}

\begin{lemma}
\label{lem:OrLiveFin}

Consider an operation $op_i$ invoked at time $t$ by an
active process $p_i$ in a run $r$ with finitely many
invocations. Then $op_i$ completes in $r$.

\end{lemma}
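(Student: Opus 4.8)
The plan is to handle the finitely-many-invocations case by tracking $op_i$ forward through the algorithm's two phases, using Lemma~\ref{lem:OrLiveActive} to establish that $p_i$ eventually converges on $\hat{sm}$ and can actually make progress. First I would observe that, since $p_i$ is active, it is never removed and so by Observation~\ref{obs:OrLiveCond} (and Lemma~\ref{lem:OracleLive1}) $p_i \in sm^t.cng.mem$ for every configuration from $p_i$'s addition onward; in particular $p_i \in \hat{sm}.cng.mem$. If $op_i$ never completes, then by Lemma~\ref{lem:oracle1} it is never part of a chosen request, so $sm_i.lastOps[i].num$ never reaches $opNum_i$ and $p_i$ stays forever in the main \textbf{while} loop (line~\ref{lineOracle:agreeB}).

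Next I would analyze the \emph{gather} phase (lines~\ref{lineOracle:gatheB}--\ref{lineOracle:gatheE}): $p_i$ waits, for each $p \in A = sm_i.cng.mem$, for a \emph{helpReply}, or for $p$ to be suspected, or for $p \in sm_i.cng.rem$. Strong Completeness of $\Diamond P^D$ guarantees every failed process is eventually permanently suspected, and every member of $sm_i.cng.mem$ that is not removed and not failed is active, hence correct, and will receive the \emph{helpRequest} and reply (or $p_i$ will first adopt a newer $sm$ via a received message, restarting the wait with an even more advanced configuration — but since there are finitely many invocations and hence finitely many reconfigurations, $sm_i.cng.mem$ stabilizes). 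So the gather phase terminates. Then in phase~2, I would argue that $p_i$ eventually has $sm_i = \hat{sm}$: by Lemma~\ref{lem:OrLiveActive}, every active member of $\hat{sm}.cng.mem$ — including $p_i$ — eventually adopts $\hat{sm}$, and once $p_i$ holds $\hat{sm}$ it will, when $pend_i$ is false, propose a request $Req$ containing $\langle p_i, op_i, opNum_i\rangle$ in the consensus associated with $\hat{sm}.ts$ and send \emph{propose} messages to all of $\hat{sm}.cng.mem$.

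The crux is showing this last consensus actually decides with $op_i$ in the chosen request. By Observation~\ref{obs:OrLiveMaj} there is a correct — in fact active — majority $M$ of $\hat{sm}.cng.mem$; by Lemma~\ref{lem:OrLiveActive} every process in $M$ eventually adopts $\hat{sm}$, and upon receiving $p_i$'s \emph{propose} message (or proposing on their own behalf) each proposes in the consensus associated with $\hat{sm}.ts$. By the Termination property of consensus (majority proposes $\Rightarrow$ every correct member decides) and our assumption that consensus does not decide before a majority proposes, the consensus decides; by Validity the decided request was proposed by some member, and — here is the subtle point I expect to be the main obstacle — I must rule out that the decided request omits $op_i$, i.e.\ that the consensus instance at timestamp $\hat{sm}.ts$ decides a request not containing $op_i$ before $p_i$ gets to contribute. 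But any decided request at timestamp $\hat{sm}.ts$ that does not advance past $\hat{sm}$ contradicts the definition of $\hat{sm}$ as the final stable state unless it contains only already-performed operations; and if the decide handler at $\hat{sm}.ts$ fires at all, $sm.ts$ advances beyond $\hat{sm}.ts$, contradicting stability of $\hat{sm}$ — \emph{unless} that decided request is exactly the one carrying $op_i$ (and possibly other pending operations), in which case $op_i$ is performed. So in a run with finitely many invocations the only way the state can stay at $\hat{sm}$ forever is if no consensus at $\hat{sm}.ts$ decides, which we just showed is impossible once $p_i$ and the active majority all hold $\hat{sm}$ and $p_i$ keeps re-proposing whenever $pend_i$ becomes false. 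Finally, once $op_i$ is in a chosen request, $p_i$ either decides locally or receives an updated $sm$ (via \emph{decide}, \emph{propose}, or the periodic \emph{update} messages from the active majority) with $sm.lastOps[i].num = opNum_i$, so it exits the while loop; if $op_i$ is a \emph{reconfig}, the wait at line~\ref{lineOracle:op7} completes because a majority of the new configuration is active and acknowledges, or a newer $sm$ arrives. Hence $op_i$ returns, completing the proof.
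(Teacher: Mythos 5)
Your proof is correct and follows essentially the same route as the paper's: a contradiction with the finality of $\hat{sm}$, obtained by showing that once $p_i$ and an active majority of $\hat{sm}.cng.mem$ hold $\hat{sm}$ and $p_i$ proposes, the consensus at $\hat{sm}.ts$ must decide, combined with the same treatment of the individual waits (the gather phase via strong completeness of the failure detector, and the final ACK wait via the active majority or the arrival of a newer $sm$). The one superfluous detour is your effort to rule out a decided request that omits $op_i$: in the contradiction framework this needs no separate argument, since \emph{any} decision in the consensus associated with $\hat{sm}.ts$ advances the state past $\hat{sm}$ and already contradicts its definition.
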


\begin{proof}

By Lemma \ref{lem:OracleLive1}, $p_i \in \hat{sm}.cng.mem$,
and by Lemma \ref{lem:OrLiveActive}, there is a point $t'$ in
$r$ s.t.\  $sm_i=\hat{sm}$ for all  $t \geq t'$.
Assume by way of contradiction that $op_i$ does not complete in
$r$. 
Therefore, $op_i$ is either stuck in one of its waits or
continuously iterates in a while loop. In each case, we show a
contradiction.
Denote by $con$ the consensus associated with timestamp
$\hat{sm}.ts$. 
By definition of $\hat{sm}$, no decision is made
in $con$ in $r$.

\begin{itemize}

  \item Operation $op_i$ waits in line \ref{lineOracle:gatheE}
  (Algorithm \ref{algOracle:operation}) forever.
  Notice that $\hat{sm}.cng.rem$ contains all the process that
  were removed in $r$, so, after time $t'$, $p_i$ does not wait
  for a reply from a removed process. 
  By the strong
  completeness property of FD, $p_i$ does not wait for faulty
  processes forever. A contradiction.

  \item Operation $op_i$ remains in the while loop in line
  \ref{lineOracle:op4} (Algorithm \ref{algOracle:operation})
  forever.
  Notice that from time $t'$ till $p_i$ proposes in $con$,
  $pend_i$\emph{=false}. Therefore, $p_i$ proposes in $con$ in
  line \ref{lineOracle:propose} (Algorithm
  \ref{algOracle:operation}), and stays in the while
  after the propose. By Observation \ref{obs:OrLiveMaj}, there
  is a majority $M$ of $\hat{sm}.cng.mem$ s.t.\ $M \subseteq
  V(t).members \cup P(t).join \setminus F(t)$.
  Therefore, by the termination of consensus, eventually a
  decision is made in $con$. A contradiction to the definition
  of $\hat{sm}$.
  
  \item Operation $op_i$ remains in the while loop in line
  \ref{lineOracle:agreeB} (Algorithm \ref{algOracle:operation})
  forever.
  Since it does not remain in the while loop in line
  \ref{lineOracle:op4}, $op_i$ proposes infinitely many
  times, and since each propose is made in a different consensus
  and $p_i$ can propose in a consensus beyond first
  one only once a decision is made in the previous one, 
  infinitely many decisions are made in $r$. A contradiction to
  the definition of $\hat{sm}$.   
  
  \item Operation $op_i$ waits in line \ref{lineOracle:op7}
  (Algorithm \ref{algOracle:operation}) forever.
  Consider two cases. First, $sm_i \neq \hat{sm}$
  when $p_i$ performs line \ref{line:oracleMajRec} (Algorithm
  \ref{algOracle:operation}). In this case, $p_i$ continues
  at time $t'$, when it adopts $\hat{sm}$, because $sm_i.ts>ts$
  hold at time $t'$.
  In the second case ($sm_i = \hat{sm}$ when $p_i$ performs line
  \ref{line:oracleMajRec}), $p_i$ sends \emph{update} message to
  all processes in $\hat{sm}.cng.mem$, and waits for a majority
  to reply.
  By Observation \ref{obs:OrLiveMaj}, there is a
  majority $M$ of $\hat{sm}.cng.mem$ s.t.\ $M \subseteq
  V(t).members \cup P(t).join \setminus F(t)$.
  Therefore, eventually $p_i$ receives replies from all the
  process in $M$, and thus continues. In both cases we have 
  contradiction.
  
\end{itemize}  

Therefore, $p_i$ completes in $r$. 

\end{proof}

\begin{theorem}
\label{theo:OracleLive}

The algorithm of Section \ref{sec:diamondP}
implements a wait-free atomic dynamic storage.

\end{theorem}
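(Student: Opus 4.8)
The plan is simply to assemble the two halves of the correctness argument that the preceding development has already set up. Atomicity is immediate from Corollary~\ref{cor:atomic}, which was obtained by exhibiting the explicit linearization $\sigma_r$ and checking, via Lemmas~\ref{lem:oracle1}--\ref{lem:oracle5}, that $\sigma_r$ contains each completed operation exactly once, respects $r$'s real-time order, and makes every read return the value of the last preceding write (or $\perp$). So the only thing the theorem still needs is wait-freedom: that every operation invoked by an active process eventually returns.

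For wait-freedom I would do a case split on the number of operation invocations in the run $r$. If $r$ has infinitely many invocations, Lemma~\ref{lem:OrLiveInfin} directly yields that $op_i$ completes: after the FD stabilizes, every later operation is forced by eventual strong accuracy to help all slow active processes, and by Observation~\ref{obs:infin} there are always later completed operations with arbitrarily large timestamps, so $p_i$'s operation gets bundled into a decided request and $p_i$ subsequently learns of it from some later helper's $sm$. If $r$ has only finitely many invocations, Lemma~\ref{lem:OrLiveFin} gives the same conclusion: the configuration stabilizes to $\hat{sm}$, by Lemma~\ref{lem:OrLiveActive} every active member of $\hat{sm}.cng.mem$ eventually adopts $\hat{sm}$, Observation~\ref{obs:OrLiveMaj} guarantees a correct majority in that configuration so consensus terminates, and the bullet-by-bullet inspection of the waits and while loops in Algorithm~\ref{algOracle:operation} rules out $p_i$ ever being stuck. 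Since the two cases are exhaustive, every active process's operation returns, which is precisely wait-freedom; combined with atomicity this gives the theorem.

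The main point to be careful about — more a subtlety than a genuine obstacle, since the heavy lifting is already in the lemmas — is that the dichotomy ``finitely vs.\ infinitely many invocations'' is indeed exhaustive and that the ``active process'' hypothesis of Lemmas~\ref{lem:OrLiveInfin} and~\ref{lem:OrLiveFin} coincides with the process class for which wait-freedom is required (correct and never the target of a pending or completed removal). One should also note explicitly that the argument covers reconfig operations, not just reads and writes: a reconfig has the extra ``update''/``ACK'' majority-propagation step before returning, and the last bullet in the proof of Lemma~\ref{lem:OrLiveFin} (together with the $sm_i.ts>ts$ escape condition) is what shows $p_i$ does not block there.
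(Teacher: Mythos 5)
Your proof is correct and follows the paper's own argument exactly: the paper likewise dispatches the theorem by combining Corollary~\ref{cor:atomic} for atomicity with Lemmas~\ref{lem:OrLiveInfin} and~\ref{lem:OrLiveFin} for the two exhaustive cases of infinitely versus finitely many invocations. Your additional remarks on the exhaustiveness of the dichotomy and on reconfig operations are sensible elaborations but introduce nothing beyond what the cited lemmas already cover.
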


\begin{proof}

By Lemmas \ref{lem:OrLiveInfin} and \ref{lem:OrLiveFin}, every
operation, invoked in $r$ by an active process,
completes. And by Corollary \ref{cor:atomic}, the algorithm is
atomic.

\end{proof}

\bibliographystyle{plain}
\bibliography{bibliography}
\end{document}